\def\fnum@figure{\figurename\thefigure}
\newcommand{\cleqn}{\setcounter{equation}{0}}
\newcommand{\clth}{\setcounter{theorem}{0}}
\newcommand {\sectionnew}[1]{\section{#1}\cleqn\clth}
\newtheorem{theorem}{Theorem}[section]
\newtheorem{lemma}[theorem]{Lemma}
\newtheorem{definition}[theorem]{Definition}
\renewcommand{\P}{\mathcal{P}}
\renewcommand{\L}{\mathcal{L}}
\newcommand{\Z}{\mathbb{Z}}
\newcommand{\T}{\mathbb{T}}
\def\({\left(}
\def\){\right)}
\newcommand{\La}{\Lambda}
\def\d{\partial}
\newcommand{\ep}{\epsilon}
\renewcommand{\L}{\mathcal L}
\def\({\left(}
\def\){\right)}
\def\[{\begin{eqnarray}}
\def\]{\end{eqnarray}}
\begin{document}

\title{Multi-fold Darboux transformations of the extended bigraded Toda Hierarchy}
\author{
Chuanzhong Li, Tao Song}
\allowdisplaybreaks
\dedicatory {\small Department of Mathematics,  Ningbo University, Ningbo, 315211, China\\
 Email:lichuanzhong@nbu.edu.cn}
\thanks{}
\texttt{}

\date{}

%%%%%%%%%%%%%%%%%%%%%%%%%%%%%%%%%%%%%%%%%%%%%%%%
\begin{abstract}
 With the extended  logarithmic flow equations and some extended Vertex operators in generalized Hirota bilinear equations, extended bigraded  Toda hierarchy(EBTH) was proved to govern the Gromov-Witten theory of orbiford $c_{NM}$ in literature. The generating function of these Gromov-Witten invariants is one special solution of the EBTH. In this paper, the multi-fold Darboux transformations and their determinant representations of the EBTH are given with two different gauge transformation operators. The two Darboux transformations in different directions are used to generate new solutions from known solutions which include soliton solutions of $(N,N)$-EBTH, i.e. the EBTH when $N=M$. From the generation of new solutions, one can find the big difference between the EBTH and the extended Toda hierarchy(ETH). Meanwhile we plotted the soliton graphs of the $(N,N)$-EBTH from which some approximation analysis will be given. From the analysis on velocities of soliton solutions, the difference between the extended flows and other flows are shown. The two different Darboux transformations constructed by us might be useful in Gromov-Witten theory of orbiford $c_{NM}$.
\end{abstract}

%%%%%%%%%%%%%%%%%%%%%%%%%%%%%%%%%%%%%%%%%%%%%%%%

\maketitle

Mathematics Subject Classifications(2000).  37K10, 37K20.\\
Keywords:  extended bigraded Toda hierarchy,  Darboux transformations, determinant representation, soliton solution.\\
\allowdisplaybreaks
\tableofcontents
 \setcounter{section}{0}

\sectionnew{Introduction}
The Toda lattice equation is a nonlinear evolutionary
differential-difference equation introduced by Toda \cite{Toda}
describing an infinite system of masses on a line that interact
through an exponential force. This equation was further generalized to Toda lattice hierarchy \cite{UT}. It is completely integrable, i.e. admits infinite
conserved quantities and Lax pair. It has important applications in many
different fields such as classical and quantum field theory, in
particular in the theory of Gromov-Witten invariants (\cite{Z}).
 Considering its application to 2D topological field theory (\cite{D witten},
\cite{witten}) and string theory (\cite{dubrovin}), one replaced the
discrete variables with continuous one. After continuous ``
interpolation'' (\cite{CDZ}) to the whole Toda lattice hierarchy, it
was found that the flow of the spatial translation was missing. In order to
get a complete family of flows,  the interpolated Toda lattice
hierarchy was extended into the so-called extended Toda hierarchy
(\cite{CDZ}). It was first conjectured and then shown (\cite{DZ},
\cite{Ge}, \cite{Z}) that the extended Toda hierarchy is the
hierarchy describing the Gromov-Witten invariants of $CP^1$ by
matrix models (\cite{matrix model}) which describe in the large $N$
limit of the $CP^1$ topological sigma model. The Darboux transformation and soli
ton solution of the extended Toda hierarchy was shortly discussed in \cite{carletthesis}.  The extended bigraded
Toda hierarchy (EBTH) was introduced by Gudio Carlet (\cite{C}) who
hoped that EBTH might also be relevant for some applications in 2D
topological field theory and in the theory of Gromov-Witten
invariants. In the paper (\cite{C}), he generalized the Toda lattice
hierarchy and extended Toda lattice hierarchy by considering $N+M$
dependent variables and used them to provide a Lax pair definition
of the extended bigraded Toda hierarchy. In the paper of Todor E.
Milanov and Hsian-Hua Tseng (\cite{TH}), they described
conjecturally a kind of  Hirota bilinear equations (HBEs) which was
similar to the Lax operators of  the EBTH and proved that it
governed the Gromov-Witten theory of orbiford $c_{NM}$.
The Hirota bilinear equation of EBTH were equivalently constructed in our early paper \cite{ourJMP} and a very recent paper \cite{leurhirota}, because of the equivalence of $t_{1,N}$ flow and $t_{0,N}$ flow of the EBTH in \cite{ourJMP}. Meanwhile it was proved to govern Gromov-Witten invariant of the total
  descendent potential of $\mathbb{P}^1$ orbifolds \cite{leurhirota}. This
hierarchy also lead to a series of results from analytical and algebraic considerations
\cite{ourBlock,solutionBTH,RMP,annal}. However the explicit solutions of the EBTH are still unknown because the constraint on spectral parameter is complicated and most of equations are nonlocal. Also how the extended flows change the velocities of the solutions is still not clear. Therefore the purpose of this paper is to solve the EBTH and identify the affection of the extended flows of the EBTH on its solutions.

Among many analytical methods, it is well known that the Darboux
 transformation is one of the efficient methods to generate the
 soliton solutions for integrable systems \cite{Liys,matveev,Mateev,binamatveev,Oevel}.
 In \cite{adler92}, the two Darboux transforms on band matrices called $LU$  and $UL$ Darboux transformations are constructed, particularly for the $2m+1$-band matrix. In fact the $2m+1$-band matrix corresponds to the $(m,m)$-EBTH. The   $LU$  and $UL$ Darboux transformations inspire us to consider two separated Darboux transformations of the EBTH.  The two separated Darboux transformations of the EBTH show that there exists a big difference between the EBTH and the ETH.
 The determinant representation of $n$-fold Darboux transformation gives a convenient tool to explicitly express new solutions\cite{gaugetras,Hedeterminant,PRE}. This remind us to consider the  Darboux transformation and its determinant representation of the EBTH. This will be used to generate new solutions from known solutions which include soliton solutions and solutions related to the Gromov-Witten theory of orbiford $c_{NM}$.

The paper is organized as follows. In Section 2 we recall the roots
and the logarithms of the Lax operator $\L$ and  the definition
of EBTH.    In Section 3, the $n$-th Darboux transformation and its determinant representation of the EBTH with the help of the first wave function is given which is used to generate new solutions from seed solutions which include soliton solutions in Section 4. Using the second wave function, the second Darboux transformation will be constructed which benefits in showing the character of the $t_{\beta,n}, -M+1\leq \beta\leq 0$ flow in Section 5. Meanwhile the second kind of soliton solutions can be given using the second Darboux transformation. Section 7 will be devoted to conclusions and discussions.

\sectionnew{ The extended  bigraded Toda Hierarchy }
We describe the lax form of the EBTH following \cite{C,ourJMP}. Introduce
firstly the lax operator
\begin{equation}\L=\Lambda^{N}+u_{N-1}\Lambda^{N-1}+\dots + u_{-M}
\Lambda^{-M}
\end{equation}
(for $N,M \geq1$ are two fixed positive integers and $u_{-M}$ is a
non-vanishing function).
  The variables $u_j$ are functions of the real continuous variable $x$ and the
  shift operator $\Lambda$ acts on a function $a(x)$ by $\Lambda a(x) = a(x + \epsilon
  )$,
i.e. $\Lambda$ is equivalent to  $e^{\epsilon\partial_{x}}$ where
the spacing unit $``\epsilon"$ is called string coupling constant.
The Lax operator $\L$ can be written in two different ways by
dressing the shift operator
  \begin{eqnarray}
  \label{dressing}\L=\P_L\Lambda^N\P_L^{-1} = \P_R \Lambda^{-M}\P_R^{-1}.\end{eqnarray}
The two dressing operators have the following form \begin{eqnarray}
&& \P_L=1+w_1\Lambda^{-1}+w_2\Lambda^{-2}+\ldots,
\label{dressP}\\
&& \P_R=\tilde{w_0}+\tilde{w_1}\Lambda+\tilde{w_2}\Lambda^2+ \ldots,
\label{dressQ} \end{eqnarray}
where $\tilde{w_0}$ is not zero.
  From identity \eqref{dressing}, we
can easily get the relation of $u_i$ and $w_j$ as following
\begin{eqnarray}
  u_{N-1}&=&w_1(x)-w_1(x+N\epsilon) \\
  u_{N-2}&=&w_2(x)-w_2(x+N\epsilon)-(w_1(x)-w_1(x+N\epsilon))w_1(x+(N-1)\epsilon) \\\notag
   u_{N-3}&=&w_3(x)-w_3(x+N\epsilon)- [w_2(x)-w_2(x+N\epsilon)
  -(w_1(x)-w_1(x+N\epsilon))w_1(x+(N-1)\epsilon)]\\
  &&w_1(x+(N-2)\epsilon)-(w_1(x)-w_1(x+N\epsilon))w_2(x+(N-3)\epsilon) \\
  \notag \cdots &\cdots &\cdots.
\end{eqnarray}\\
From identity \eqref{dressing}, we can also easily get the relation of $u_i$ and
$\tilde w_j$ formally as following
\begin{eqnarray}
  u_{-M}&=&\frac{\tilde w_0(x)}{\tilde w_0(x-M\epsilon)} \\
  u_{-M+1}&=&\frac{\tilde w_1(x)-\frac{\tilde w_0(x)}{\tilde w_0(x-M\epsilon)}\tilde w_1(x-M\epsilon)}{\tilde w_0(x-(M-1)\epsilon)} \\\notag
   u_{-M+2}&=&\frac{\tilde w_2(x)-\frac{\tilde w_0(x)}{\tilde w_0(x-M\epsilon)}\tilde w_2(x-M\epsilon)-\frac{\tilde w_1(x)-\frac{\tilde w_0(x)}
   {\tilde w_0(x-M\epsilon)}\tilde w_1(x-M\epsilon)}{\tilde w_0(x-(M-1)\epsilon)}\tilde w_1(x-(M-1)\epsilon)}{\tilde w_0(x-(M-2)\epsilon)}\\
  &&
\\
  \notag \cdots &\cdots &\cdots\\\notag
 u_{N-1}&=&\frac{\tilde w_{M+N-1}-u_{-M}\tilde w_{M+N-1}(x-M\epsilon)
   -\dots-u_{N-2}\tilde w_1(x+(N-2)\epsilon)}{\tilde w_0(x+(N-1)\epsilon)}\\\notag
 u_N=  1&=&\frac{\tilde w_{M+N}-u_{-M}\tilde w_{M+N}(x-M\epsilon)
   -\dots-u_{N-1}\tilde w_1(x+(N-1)\epsilon)}{\tilde
   w_0(x+N\epsilon)}.
\end{eqnarray} One can define the fractional powers $\L^{\frac1N}$ and
$\L^{\frac1M}$ in the form of
\begin{equation}
  \notag
  \L^{\frac1N} = \Lambda+ \sum_{k\leq 0} a_k \Lambda^k , \qquad \L^{\frac1M} = \sum_{k \geq -1} b_k \Lambda^k
\end{equation}
defined by the relations
\begin{equation}
  \notag
  (\L^{\frac1N} )^N = \L, \qquad (\L^{\frac1M} )^M = \L.
\end{equation}
It should be stressed that $\L^{\frac1N}$ and $\L^{\frac1M}$ are two different
operators even if $N=M$ because of two fractional expansions in two different directions.

Of course an equivalent definition can be given in terms of the
dressing operators
\begin{equation}
  \notag
  \L^{\frac1N} = \P _{L}\Lambda\P_{L}^{-1}, \qquad \L^{\frac1M} = \P_{R}\Lambda^{-1} \P_{R}^{ -1}.
\end{equation}

We also define two logarithms of the operator $\L$ by the
following formulas as \cite{C,ourJMP}

\begin{subequations}  \notag
 \begin{align}
  &\log_+ \L = \P_{L} N \epsilon \partial \P_{L}^{-1} = N \epsilon \partial - N \epsilon \P_{Lx} \P_{L}^{-1}, \\
  &\log_- \L = - \P_{R}M \epsilon\partial\P_{R}^{-1} = - M \epsilon \partial + M \epsilon \P_{Rx} \P_{R}^{-1},
\end{align}
\end{subequations}

where $\partial = \frac{d}{dx}$. These are differential-difference
operators in forms as

\begin{align}
   \notag
  &\log_+ \L = N \epsilon \partial + 2 N \sum_{k > 0} W_{-k}(x) \Lambda^{-k} ,\\
  &\log_- \L = -M \epsilon \partial + 2 M \sum_{k \geq 0} W_k(x) \Lambda^k .
  \notag
\end{align}
One can combine them into a single logarithmic operator\cite{C}

\begin{equation}
   \notag
  \log\L = \frac1{2N} \log_+\L + \frac1{2M} \log_- \L = \sum_{k \in \Z} W_k \Lambda^k.
\end{equation}

That is a pure difference operator since the derivatives cancel.
Given any difference operator $A= \sum_k A_k \Lambda^k$, the
positive and negative projections are given by $A_+ = \sum_{k\geq0}
A_k \Lambda^k$ and $A_- = \sum_{k<0} A_k \Lambda^k$.

Similarly to \cite{C}, we give the following definition.
\begin{definition} \label{deflax}
The Lax formulation of the extended bigraded Toda hierarchy is given
by
\begin{equation}
  \label{edef}
\frac{\partial \L}{\partial t_{\alpha, n}} = [ A_{\alpha,n} ,\L ]
\end{equation}
for $\alpha = N-1,N-2, \dots, -M$ and $n \geq 0$. The operators
$A_{\alpha ,n}$ are defined by
\begin{subequations}
\label{Adef}
\begin{align}
  &A_{\alpha,n} = \frac{\Gamma (2- \frac{\alpha}{N} )}{  \epsilon \Gamma(n+2 -\frac{\alpha}{N} ) } ( \L^{n+1-\frac{\alpha}N })_+, \quad \text{for} \quad \alpha = N-1, \dots, 0,\\
  &A_{\alpha,n} = \frac{-\Gamma (2+\frac{\alpha}{M} )}{  \epsilon \Gamma(n+2 +\frac{\alpha}{M} ) } ( \L^{n+1+\frac{\alpha}M })_-, \quad \text{for} \quad \alpha = 0, \dots, -M+1, \\
  &A_{-M,n} = \frac{2}{\epsilon n!} [ \L^n (\log \L - \frac12 ( \frac1M + \frac1N) c_n ) ]_+ .
\end{align}
\end{subequations}
These constants $c_n$ are defined by
\begin{equation}
  \label{b24}
  c_n = \sum_{k=1}^n \frac1k ,  c_0=0 .
\end{equation}
\end{definition}

\subsection{The $(2,2)$-EBTH}

 The Lax operator of the (2,2)-EBTH is
\begin{equation}L=\Lambda^2+u_{1}\Lambda+u_0 + u_{-1}
\Lambda^{-1}+ u_{-2}\Lambda^{-2}.
\end{equation}

Then by Lax equations, we get the $t_{1,0}$ flow of the (2,2)-EBTH
\begin{eqnarray}
 \partial_{t_{1,0}} L= [\Lambda +(1+\La)^{-1}u_1(x), L]
\end{eqnarray}
which correspond to
\begin{eqnarray}\label{10flow}
\begin{cases}
 \partial_{t_{1,0}} u_1(x)&= u_0(x+\epsilon)-u_0(x)+u_1(x)(1-\La)(1+\La)^{-1}u_1(x)\\
 \partial_{t_{1,0}} u_0(x)&= u_{-1}(x+\epsilon)-u_{-1}(x)\\
 \partial_{t_{1,0}} u_{-1}(x)&= u_{-2}(x+\epsilon)-u_{-2}(x)+u_{-1}(x)(1-\La^{-1})(1+\La)^{-1}u_1(x)\\
  \partial_{t_{1,0}} u_{-2}(x)&=u_{-2}(x)(1-\La^{-2})(1+\La)^{-1}u_1(x).
 \end{cases}
 \end{eqnarray}

 The $t_{0,0}$ flow will have finite terms as following because it does not use the fraction power of Lax operator $L$,
\begin{eqnarray}
 \partial_{t_{0,0}} L= [\Lambda^2+u_1\Lambda +u_0, L]
\end{eqnarray}
which correspond to
\begin{eqnarray}\label{00flow}
\begin{cases}
\partial_{t_{0,0}} u_1(x)&= u_{-1}(x+2\epsilon)-u_{-1}(x)\\
 \partial_{t_{0,0}} u_0(x)&= u_{-2}(x+2\epsilon)-u_{-2}(x)+u_1(x)u_{-1}(x+\epsilon)-u_{-1}(x)u_1(x-\ep)\\
 \partial_{t_{0,0}} u_{-1}(x)&= u_1(x)u_{-2}(x+\epsilon)-u_{-2}(x)u_1(x-2\ep)+u_{-1}(x)(u_0(x)-u_0(x-\epsilon))\\
  \partial_{t_{0,0}} u_{-2}(x)&=u_{-2}(x)( u_0(x)-u_0(x-2\epsilon)).
   \end{cases}
\end{eqnarray}
For $t_{-1,0}$ flow, equations will also be complicated because of another fraction power of $L$. The equation is
\begin{eqnarray}
 \partial_{t_{-1,0}} L= -[e^{(1+\Lambda^{-1})^{-1}\log u_{-2}}\Lambda^{-1} , L]
\end{eqnarray}
which corresponds to
\begin{eqnarray}\label{-10flow}
\begin{cases}
\partial_{t_{-1,0}} u_1(x)&= e^{(1+\Lambda^{-1})^{-1}\log u_{-2}(x+2\epsilon)}-e^{(1+\Lambda^{-1})^{-1}\log u_{-2}(x)}\\[1.5ex]
 \partial_{t_{-1,0}} u_0(x)&= u_1(x)e^{(1+\Lambda^{-1})^{-1}\log u_{-2}(x+\epsilon)}-e^{(1+\Lambda^{-1})^{-1}\log u_{-2}(x)}u_1(x-\ep)\\[1.5ex]
 \partial_{t_{-1,0}} u_{-1}(x)&= e^{(1+\Lambda^{-1})^{-1}\log u_{-2}(x)}(u_0(x)-u_0(x-\epsilon))\\[1.5ex]
  \partial_{t_{-1,0}} u_{-2}(x)&=u_{-1}(x)e^{(1+\Lambda^{-1})^{-1}\log u_{-2}(x-\epsilon)}-e^{(1+\Lambda^{-1})^{-1}\log u_{-2}(x)}u_{-1}(x-\epsilon).
  \end{cases}
\end{eqnarray}

The  $t_{-2,0}$ flow equation  of extended bigraded Toda hierarchy is given
by

\begin{equation}\label{-20flow}
\frac{\partial \L}{\partial t_{-2,0}} = \frac{\partial \L}{\partial x}.
\end{equation}
One can find the $t_{1,0}$ flow and the $t_{-1,0}$ flow of the $(2,2)$-EBTH are nonlocal equations.

 For the convenience, we
will  define the following operators:

\begin{equation}
  B_{\alpha , n} :=
\begin{cases}
  \frac{\Gamma ( 2- \frac{\alpha}{N} )}{\epsilon\Gamma (n+2 - \frac{\alpha}{N} )} \L^{n+1-\frac{\alpha}{N}} &\alpha=N-1\dots 1\\
  \frac{\Gamma ( 2 + \frac{\alpha}{M} )}{\epsilon\Gamma (n+2 + \frac{\alpha}{M} )}  \L^{n+1+\frac{\alpha}{M}} &\alpha = 0\dots -M+1\\
 \frac{2}{\epsilon n!} [ \L^n( \log \L - \frac{1}{2}( \frac{1}{M} + \frac{1}{N} ) c_n) ] & \alpha = -M.
  \end{cases}
\end{equation}

In fact the EBTH system can also be equivalently rewritten in form of the following linear differential system
\[\label{linearequation}
\begin{cases}
 \L\phi&=\lambda\phi,\\
\frac{\partial \phi}{\partial t_{\alpha, n}}&=(B_{\alpha,n})_+\phi,\ \alpha = N-1,N-2, \dots, -M, n \geq 0.\end{cases} \]
We will call the function $\phi$ in eq.\eqref{linearequation} the first wave function of the EBTH.
 Particularly for $N=M=1$ this hierarchy
coincides with the extended Toda hierarchy introduced in
\cite{CDZ}.

In the next section, it is time to introduce the Darboux transformation of the EBTH basing on linear equation eq.\eqref{linearequation}.

\section{The first Darboux transformation of the EBTH}
In this section, we will consider the Darboux transformation of the EBTH on Lax operator
 \[\L=\Lambda^{N}+u_{N-1}\Lambda^{N-1}+\dots + u_{-M}
\Lambda^{-M},\]
 i.e.
  \[\label{1darbouxL}\L^{[1]}=\Lambda^{N}+u_{N-1}^{[1]}\Lambda^{N-1}+\dots + u_{-M}^{[1]}
\Lambda^{-M}=W\L W^{-1},\]
where $W$ is the Darboux transformation operator.

That means after Darboux transformation, the spectral problem

\[\L\phi=\Lambda^{N}\phi+u_{N-1}\Lambda^{N-1}\phi+\dots + u_{-M}
\Lambda^{-M}\phi=\lambda\phi,\]
will become

\[\L^{[1]}\phi^{[1]}=\Lambda^{N}\phi^{[1]}+u_{N-1}^{[1]}\Lambda^{N-1}\phi^{[1]}+\dots + u_{-M}^{[1]}
\Lambda^{-M}\phi^{[1]}=\lambda\phi^{[1]}.\]

To keep the Lax pair eq.\eqref{edef} of the EBTH invariant, i.e.

\begin{equation}
\frac{\partial \L}{\partial t_{\alpha, n}} = [ (B_{\alpha,n})_+ ,\L ],\ \
\frac{\partial \L^{[1]}}{\partial t_{\alpha, n}} = [ (B_{\alpha,n}^{[1]})_+ ,\L^{[1]} ],\ \ B_{\alpha,n}^{[1]}:=B_{\alpha,n}(\L^{[1]}),
\end{equation} dressing operator $W$ should satisfy the following dressing equation
\[W_{t_{\gamma,n}}=-W(B_{\gamma,n})_++(WB_{\gamma,n}W^{-1})_+W,\ \ -M \leq \gamma\leq N-1, n\geq 0.\]
where $W_{t_{\gamma,n}}$ means the derivative of $W$ by $t_{\gamma,n}.$
To give the Darboux transformation, we need the following lemma.
\begin{lemma}\label{lema}
The operator $B:=\sum_{n=0}^{\infty}b_n\La^n$ is a non-negative difference operator,  $C:=\sum_{n=1}^{\infty}c_n\La^{-n}$ is a negative difference operator and $f,g$ (short for $f(x),g(x)$) are two functions of spatial parameter $x$, following identities hold
\begin{equation}\label{Bneg}
(Bf \frac{\Lambda^{-1}}{1-\Lambda^{-1}} g)_-=B(f) \frac{\Lambda^{-1}}{1-\Lambda^{-1}} g,\ \ \ (f \frac{\Lambda^{-1}}{1-\Lambda^{-1}} gB)_-=f \frac{\Lambda^{-1}}{1-\Lambda^{-1}}B^*(g),
\end{equation}
\begin{equation}\label{Cpos}
(Cf \frac{1}{1-\Lambda} g)_+=C(f)\frac{1}{1-\Lambda} g,\ \ \ (f \frac{1}{1-\Lambda} gC)_+=f \frac{1}{1-\Lambda}C^*(g).
\end{equation}
\end{lemma}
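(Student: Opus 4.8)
The plan is to prove all four identities by one and the same direct computation: expand each operator product into a doubly-indexed series of pure shifts, apply the stated projection, and then reindex by the resulting power of $\Lambda$. Throughout I use the commutation rule $\Lambda^{n} h = h(x+n\epsilon)\Lambda^{n}$ for any function $h$, together with the expansions $\frac{\Lambda^{-1}}{1-\Lambda^{-1}}=\sum_{k\geq 1}\Lambda^{-k}$ and $\frac{1}{1-\Lambda}=\sum_{k\geq 0}\Lambda^{k}$. The structural fact I will exploit is that the infinite geometric tail of these two series exactly matches the support of $B$ (indices $n\geq 0$) and of $C$ (indices $n\geq 1$), so that after projection the summation over the operator index decouples cleanly from the power of $\Lambda$, with no boundary terms dropped.

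For the first identity I would write $B f\,\frac{\Lambda^{-1}}{1-\Lambda^{-1}}\,g=\sum_{n\geq 0}\sum_{k\geq 1} b_n\,f(x+n\epsilon)\,g(x+(n-k)\epsilon)\,\Lambda^{n-k}$, obtained by pushing each $\Lambda^{n}$ past $f$ and collecting the shifts. The negative projection keeps exactly the terms with $n-k<0$; setting $m=k-n\geq 1$ and noting that for each fixed $m$ the constraint $k=n+m\geq 1$ holds automatically for every $n\geq 0$, the surviving part is $\sum_{m\geq 1}\bigl(\sum_{n\geq 0} b_n\,f(x+n\epsilon)\bigr)\,g(x-m\epsilon)\,\Lambda^{-m}$. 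The inner sum is precisely the function $B(f)=\sum_{n\geq 0}b_n f(x+n\epsilon)$, and the remaining factor reassembles into $\frac{\Lambda^{-1}}{1-\Lambda^{-1}}g$, which gives the claimed right-hand side. The third identity (positive projection, $C$ on the left) is proved verbatim, now with $\sum_{k\geq 0}\Lambda^{k}$, the surviving condition $k-n\geq 0$, and the recognition of $C(f)=\sum_{n\geq 1}c_n f(x-n\epsilon)$.

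For the two right-multiplication identities the same expansion applies, but the shifts now act through $g$ before meeting the operator coefficients, which is exactly what produces the formal adjoint. Carrying out the bookkeeping for $f\,\frac{\Lambda^{-1}}{1-\Lambda^{-1}}\,gB$ yields a $\Lambda^{-m}$-coefficient of the form $f(x)\sum_{n\geq 0} b_n(x-(n+m)\epsilon)\,g(x-(n+m)\epsilon)$, and one checks this equals $f(x)\bigl(B^{*}(g)\bigr)(x-m\epsilon)$ using $B^{*}=\sum_{n}\Lambda^{-n} b_n$, so that $B^{*}(g)(x)=\sum_{n} b_n(x-n\epsilon)g(x-n\epsilon)$; the fourth identity is identical with $C^{*}$. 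The only genuine obstacle is the careful index bookkeeping: one must confirm that the projection constraint combined with the ranges $n\geq 0$ (resp.\ $n\geq 1$) and $k\geq 1$ (resp.\ $k\geq 0$) leaves the \emph{entire} operator index range intact for every fixed $m$, so that the decoupled inner sum is the complete operator $B$ (resp.\ $C$), or its adjoint, acting on the function rather than a truncation of it. Once this range check is settled, each of the four identities follows by inspection.
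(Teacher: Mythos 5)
Your proposal is correct and follows essentially the same route as the paper: a direct expansion using $\Lambda^{n}h = h(x+n\epsilon)\Lambda^{n}$ and the geometric series for $\frac{\Lambda^{-1}}{1-\Lambda^{-1}}$ and $\frac{1}{1-\Lambda}$, with the projection then decoupling the operator index from the shift power (the paper packages your double-sum reindexing into the single identity $\bigl(\frac{\Lambda^{m-1}}{1-\Lambda^{-1}}\bigr)_-=\frac{\Lambda^{-1}}{1-\Lambda^{-1}}$ for $m\geq 0$, which is the same fact). Your range check confirming that the full index range of $B$ (resp.\ $C$) survives the projection, and your explicit adjoint formulas $B^{*}=\sum_n\Lambda^{-n}b_n$, $C^{*}=\sum_n\Lambda^{n}c_n$, match the paper's computation exactly, and you additionally write out the $C$ identities that the paper leaves as ``similar.''
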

\begin{proof}
Here we only give the proof of the eq.\eqref{Bneg} by direct calculation
\[\notag
(Bf \frac{\Lambda^{-1}}{1-\Lambda^{-1}} g)_-&=&\sum_{m=0}^{\infty}b_m(f(x+m\epsilon)\La^m \frac{\Lambda^{-1}}{1-\Lambda^{-1}} g)_-\\ \notag
&=&\sum_{m=0}^{\infty}b_mf(x+m\epsilon)(\frac{\Lambda^{m-1}}{1-\Lambda^{-1}})_- g\\ \notag
&=&\sum_{m=0}^{\infty}b_mf(x+m\epsilon)\frac{\Lambda^{-1}}{1-\Lambda^{-1}} g\\
&=&B(f) \frac{\Lambda^{-1}}{1-\Lambda^{-1}} g,\]

\[\notag
(f \frac{\Lambda^{-1}}{1-\Lambda^{-1}} gB)_-&=&\sum_{m=0}^{\infty}(f \frac{\Lambda^{-1}}{1-\Lambda^{-1}} gb_m\La^m)_-\\ \notag
&=&\sum_{m=0}^{\infty}(f \frac{\Lambda^{-1}}{1-\Lambda^{-1}}\La^m g(x-m\epsilon)b_m(x-m\epsilon))_-\\ \notag
&=&\sum_{m=0}^{\infty}f( \frac{\Lambda^{m-1}}{1-\Lambda^{-1}})_- g(x-m\epsilon)b_m(x-m\epsilon)\\ \notag
&=&\sum_{m=0}^{\infty}f \frac{\Lambda^{-1}}{1-\Lambda^{-1}} b_m(x-m\epsilon)g(x-m\epsilon)\\
&=&f \frac{\Lambda^{-1}}{1-\Lambda^{-1}}B^*(g).\]
Similar proof for the eq.\eqref{Cpos} can be got easily.

\end{proof}

Now, we will give the following important theorem which will be used to generate new solutions.

\begin{theorem}
If $\phi$ is the first wave function of the EBTH,
the Darboux transformation operator of the EBTH
 \[W(\lambda)=(1-\frac{\phi}{\La^{-1}\phi}\La^{-1})=\phi\circ(1-\La^{-1})\circ\phi^{-1},\]

will generater new solutions $u_{i}^{[1]}$ from seed solutions
$u_{i}, -M\leq i\leq N-1$

\[\label{1uN-11}u_{N-1}^{[1]}&=&u_{N-1}+(\La^N-1)\frac{\phi}{\La^{-1}\phi},\\ \notag
\dots&& \dots\\
u_{i}^{[1]}&=&u_{i}+u_{i+1}^{[1]}(\La^{i+1}\frac{\phi}{\La^{-1}\phi})-\frac{\phi}{\La^{-1}\phi}(\La^{-1}u_{i+1}),\\ \notag
\dots&& \dots,\\ \label{1u-M-11}
u_{-M}^{[1]}&=&\frac{\phi}{\La^{-1}\phi}(\La^{-1}u_{-M})\frac{\La^{-M-1}\phi}{\La^{-M}\phi}.\]

\end{theorem}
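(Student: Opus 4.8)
The plan is to take $\L^{[1]} := W\L W^{-1}$ as the definition and to establish two things: that $\L^{[1]}$ is again a banded operator $\Lambda^{N}+u_{N-1}^{[1]}\Lambda^{N-1}+\dots+u_{-M}^{[1]}\Lambda^{-M}$, and that its coefficients are the ones displayed. Writing $v := \frac{\phi}{\Lambda^{-1}\phi}=\frac{\phi(x)}{\phi(x-\epsilon)}$ so that $W=1-v\Lambda^{-1}$, the equality with $\phi\circ(1-\Lambda^{-1})\circ\phi^{-1}$ is a one-line check, as is the crucial identity $W\phi=\phi-v\,\phi(x-\epsilon)=0$. Because $W$ annihilates $\phi$ while $\L\phi=\lambda\phi$, for any other $\lambda$-eigenfunction $\psi$ one gets $\L^{[1]}(W\psi)=W\L\psi=\lambda\,(W\psi)$, so $W$ is a legitimate gauge operator and $\phi^{[1]}:=W\psi$ solves the transformed spectral problem.

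First I would recast the defining identity as $\L^{[1]}W=W\L$ and compare coefficients of each power $\Lambda^{k}$. With $\L=\sum_{i=-M}^{N}u_{i}\Lambda^{i}$ ($u_{N}=1$) and $\L^{[1]}=\sum_{i}u_{i}^{[1]}\Lambda^{i}$, both sides are difference operators of top degree $N$ and bottom degree $-M-1$. The $\Lambda^{N}$ coefficient forces $u_{N}^{[1]}=1$; for $-M\le k\le N-1$ the $\Lambda^{k}$ coefficient reads $u_{k}^{[1]}-u_{k+1}^{[1]}(\Lambda^{k+1}v)=u_{k}-v\,(\Lambda^{-1}u_{k+1})$, which is exactly the displayed recursion $u_{k}^{[1]}=u_{k}+u_{k+1}^{[1]}(\Lambda^{k+1}v)-v(\Lambda^{-1}u_{k+1})$; specializing to $k=N-1$ with $u_{N}=u_{N}^{[1]}=1$ gives the top formula $u_{N-1}^{[1]}=u_{N-1}+(\Lambda^{N}-1)v$.

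The only coefficient not consumed by the recursion is that of $\Lambda^{-M-1}$, and here lies the main obstacle: $\L^{[1]}$ must carry no term below $\Lambda^{-M}$, so the $\Lambda^{-M-1}$ coefficient of $W\L-\L^{[1]}W$ has to vanish. This produces $u_{-M}^{[1]}(\Lambda^{-M}v)=v\,(\Lambda^{-1}u_{-M})$, i.e. the closed form $u_{-M}^{[1]}=\frac{\phi}{\Lambda^{-1}\phi}(\Lambda^{-1}u_{-M})\frac{\Lambda^{-M-1}\phi}{\Lambda^{-M}\phi}$, and it must be consistent with the value the recursion already assigns at $k=-M$. This consistency is precisely where the eigenvalue equation $\L\phi=\lambda\phi$ is indispensable. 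The cleanest way to settle it is to conjugate: set $\hat\L:=\phi^{-1}\L\phi=\sum_{i=-M}^{N}\hat u_{i}\Lambda^{i}$ with $\hat u_{i}=u_{i}\frac{\Lambda^{i}\phi}{\phi}$, so that $\L^{[1]}=\phi\,(1-\Lambda^{-1})\hat\L(1-\Lambda^{-1})^{-1}\phi^{-1}$ and $\L\phi=\lambda\phi$ becomes $\sum_{i}\hat u_{i}=\hat\L(1)=\lambda$. Writing $M_{0}:=(1-\Lambda^{-1})\hat\L(1-\Lambda^{-1})^{-1}=\sum_{k}m_{k}\Lambda^{k}$, the relation $M_{0}(1-\Lambda^{-1})=(1-\Lambda^{-1})\hat\L$ yields the telescoping recursion $m_{k}-m_{k+1}=\hat u_{k}-\Lambda^{-1}\hat u_{k+1}$; summing downward and inserting $\sum_{i}\hat u_{i}=\lambda$ (a constant, so $\Lambda^{-1}\lambda=\lambda$) gives $m_{-M-1}=\lambda-\lambda=0$, hence $m_{k}=0$ for all $k\le-M-1$. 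This simultaneously proves truncation at $\Lambda^{-M}$ and the required consistency.

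Finally, to confirm that the $u_{i}^{[1]}$ solve every flow and not merely the spectral problem, I would verify that $W$ satisfies the dressing equation $W_{t_{\gamma,n}}=-W(B_{\gamma,n})_{+}+(WB_{\gamma,n}W^{-1})_{+}W$. Here $W^{-1}=\phi(1-\Lambda^{-1})^{-1}\phi^{-1}$ introduces the kernel $\frac{\Lambda^{-1}}{1-\Lambda^{-1}}$, and the projection identities of Lemma \ref{lema} are exactly what is needed to split $(WB_{\gamma,n}W^{-1})_{+}$ and move $(B_{\gamma,n})_{+}$ through this kernel, while the time derivative of $W$ is computed from $\partial_{t_{\gamma,n}}\phi=(B_{\gamma,n})_{+}\phi$ supplied by the linear system \eqref{linearequation}. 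I expect the two genuinely delicate points to be this dressing-equation check and the truncation consistency above, both hinging on $\phi$ being a true wave function, i.e. on the simultaneous validity of $\L\phi=\lambda\phi$ and the linear flow equations.
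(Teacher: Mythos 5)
Your proposal is correct, and its core --- verifying the dressing equation $W_{t_{\gamma,n}}=-W(B_{\gamma,n})_{+}+(WB_{\gamma,n}W^{-1})_{+}W$ by differentiating $W=\phi\circ(1-\La^{-1})\circ\phi^{-1}$, using $\partial_{t_{\gamma,n}}\phi=(B_{\gamma,n})_{+}\phi$ and the projection identities of Lemma \ref{lema} to peel off the positive part --- is exactly the computation the paper performs. Where you genuinely diverge is in what the paper compresses into a single sentence (``Eqs.~\eqref{1uN-11}--\eqref{1u-M-11} can be directly got from eq.~\eqref{1darbouxL}''): you make the coefficient comparison in $\L^{[1]}W=W\L$ explicit, and, more importantly, you supply a truncation argument the paper omits entirely. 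Since $W^{-1}=\phi\circ(1-\La^{-1})^{-1}\circ\phi^{-1}$ carries an infinite tail of negative powers of $\La$, it is not automatic that $\L^{[1]}=W\L W^{-1}$ remains banded between $\La^{N}$ and $\La^{-M}$; your conjugation $\hat\L=\phi^{-1}\L\phi$, the observation $\hat\L(1)=\lambda$ forced by $\L\phi=\lambda\phi$, and the telescoping relation $m_{k}-m_{k+1}=\hat u_{k}-\La^{-1}\hat u_{k+1}$ (summed against the constancy of $\lambda$) cleanly yield $m_{k}=0$ for all $k\le -M-1$, which both proves the banded form and reconciles the recursive value of $u_{-M}^{[1]}$ with the closed product formula. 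This is precisely the point where the eigenvalue equation, as opposed to the flow equations, is indispensable, and your version buys rigor there at little extra cost. One caveat applies to you and the paper alike: the dressing equation only gives $\partial_{t_{\gamma,n}}\L^{[1]}=[(WB_{\gamma,n}W^{-1})_{+},\L^{[1]}]$, and the identification $WB_{\gamma,n}(\L)W^{-1}=B_{\gamma,n}(\L^{[1]})$ for the fractional powers and the logarithm (equivalently, that $W\P_L$ and a suitably normalized transform of $\P_R$ are admissible dressing operators for $\L^{[1]}$ --- the latter guaranteed by your truncation and the nonvanishing of $u_{-M}^{[1]}$) is left implicit in both arguments, so it does not distinguish your route from the paper's.
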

\begin{proof}

In the following proof, using eq.\eqref{Bneg} in Lemma \ref{lema}, a direct computation will lead to the following
 \begin{eqnarray*}W_{t_{\gamma,n}}W^{-1}&=&(\phi\circ(1-\La^{-1})\circ\phi^{-1})_{t_{\gamma,n}}\phi\circ(1-\La^{-1})^{-1}\circ\phi^{-1}\\
 &=&(((B_{\gamma,n})_+\phi)\circ(1-\La^{-1})\circ\phi^{-1})\phi\circ(1-\La^{-1})^{-1}\circ\phi^{-1}\\
 &&-\phi\circ(1-\La^{-1})\circ((B_{\gamma,n})_+\phi)\phi^{-1}\circ(1-\La^{-1})^{-1}\circ\phi^{-1}\\
 &=&((B_{\gamma,n})_+\phi)\phi^{-1}-\phi\circ(1-\La^{-1})\circ((B_{\gamma,n})_+\phi)\phi^{-1}\circ(1-\La^{-1})^{-1}\circ\phi^{-1}\\
   &=&-(\phi\circ[(1-\La^{-1})\cdot\phi^{-1}(x) ((B_{\gamma,n}(x))_+\cdot\phi(x))]\circ(1-\La^{-1})^{-1}\circ\phi^{-1})_-\\
    &=&-(\phi\circ(1-\La^{-1})\circ\phi^{-1}(x)\circ (B_{\gamma,n}(x))_+\circ\phi(x)\circ(1-\La^{-1})^{-1}\circ\phi^{-1})_-\\
  &=&-\phi\circ(1-\La^{-1})\circ\phi^{-1}(x)\circ (B_{\gamma,n})_+(x)\circ\phi(x)\circ(1-\La^{-1})^{-1}\circ\phi^{-1}\\
  &&+(\phi\circ(1-\La^{-1})\circ\phi^{-1}(x)\circ B_{\gamma,n}(x)\circ\phi(x)\circ(1-\La^{-1})^{-1}\circ\phi^{-1})_+\\
  &=&-W(B_{\gamma,n})_+W^{-1}+(WB_{\gamma,n}W^{-1})_+.
   \end{eqnarray*}
 Therefore  \[W=\phi\circ(1-\La^{-1})\circ\phi^{-1},\]
 can be as a Darboux transformation of the EBTH.
 Eqs.\eqref{1uN-11}-\eqref{1u-M-11} can be directly got from eq.\eqref{1darbouxL}.

\end{proof}

Define $ \phi_i=\phi_i^{[0]}:=\phi|_{\lambda=\lambda_i}$, then one can choose the specific one-fold  Darboux transformation of the EBTH as following
\[W_1(\lambda_1)=1-\frac{\phi_1}{\La^{-1}\phi_1}\La^{-1}=\frac{\T_1}{\phi_{1}(x-\epsilon)},\]
where

\[\T_1&=&\left|\begin{matrix}1&\Lambda^{-1}\\
\phi_{1}&\phi_{1}(x-\epsilon)
\end{matrix}\right|.\]

Meanwhile, we can also get Darboux transformation on wave function $\phi$ as following

 \[\phi^{[1]}=(1-\frac{\phi_1}{\La^{-1}\phi_1}\La^{-1})\phi.\]
Then using iteration on Darboux transformation, the $j$-th Darboux transformation from the $(j-1)$-th solution is as

\[\phi^{[j]}&=&(1-\frac{\phi_j^{[j-1]}}{\La^{-1}\phi_j^{[j-1]}}\La^{-1})\phi^{[j-1]},\\
u_{N-1}^{[j]}&=&u_{N-1}^{[j-1]}+(\La^N-1)\frac{\phi_j^{[j-1]}}{\La^{-1}\phi_j^{[j-1]}},\\ \notag
\dots&& \dots\\
u_{i}^{[j]}&=&u_{i}^{[j-1]}+u_{i+1}^{[j]}(\La^{i+1}\frac{\phi_j^{[j-1]}}{\La^{-1}\phi_j^{[j-1]}})-\frac{\phi_j^{[j-1]}}{\La^{-1}\phi_j^{[j-1]}}(\La^{-1}u_{i+1}^{[j-1]}),\\ \notag
\dots&& \dots,\\
u_{-M}^{[j]}&=&\frac{\phi_j^{[j-1]}}{\La^{-1}\phi_j^{[j-1]}}(\La^{-1}u_{-M}^{[j-1]})\frac{\La^{-M-1}\phi_j^{[j-1]}}{\La^{-M}\phi_j^{[j-1]}},\]
where $ \phi_i^{[j-1]}:=\phi^{[j-1]}|_{\lambda=\lambda_i},$ are wave functions corresponding to different spectrals with the $(j-1)$-th solutions $u_{N-1}^{[j-1]},u_{N-2}^{[j-1]},\dots,u_{-M}^{[j-1]}.$ It can be checked that $ \phi_i^{[j-1]}=0,\ \ i=1,2,\dots, j-1.$

After iteration on Darboux transformations, the following theorem about the two-fold  Darboux transformation of the EBTH can be derived by direct calculation.

\begin{theorem}
The two-fold  Darboux transformation of the EBTH is as following
\[W_2=1+t_1^{[2]}\Lambda^{-1}+t_2^{[2]}\Lambda^{-2}=\frac{\T_2}{\Delta_2},\]
where

\[\Delta_2=\left|\begin{matrix}
\phi_{1}(x-\epsilon)&\phi_{1}(x-2\epsilon)\\
\phi_{2}(x-\epsilon)&\phi_{2}(x-2\epsilon)
\end{matrix}\right|, \ \ \T_2&=&\left|\begin{matrix}1&\Lambda^{-1}&\Lambda^{-2}\\
\phi_{1}&\phi_{1}(x-\epsilon)&\phi_{1}(x-2\epsilon)\\
\phi_{2}&\phi_{2}(x-\epsilon)&\phi_{2}(x-2\epsilon)
\end{matrix}\right|.\]

The Darboux transformation leads to new solutions from seed solutions
\[u_{N-1}^{[2]}&=&u_{N-1}+(\La^N-1)t_1^{[2]},\\ \notag
\dots&& \dots,\\
u_{i}^{[2]}&=&\sum_{j=i}^{min{(N,i+2)}}u_{j}(x+(i-j)\epsilon)t_{j-i}^{[2]}-\sum_{j=i+1}^{min{(N,i+2)}}u_{j}^{[2]}t_{j-i}^{[2]}(x+j\epsilon),\\ \notag
\dots&& \dots,\\
u_{-M}^{[2]}&=&t_2^{[2]}(x)(\La^{-2}u_{-M})t_2^{[2]-1}(x-M\epsilon),\]
where $u_{i}^{[2]}$ can have another representation as
\[\notag
u_{i}^{[2]}
&=&\frac{\sum_{j=max{(-M,i-2)}}^iu_{j}(x+(i-2-j)\epsilon)t_{j-i+2}^{[2]}-\sum_{j=max{(-M,i-2)}}^{i-1}u_{j}^{[2]}t_{j-i+2}^{[2]}(x+j\epsilon)}
{t_{2}^{[2]}(x+i\epsilon)}.\\\]
\end{theorem}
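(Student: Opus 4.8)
The plan is to realise $W_2$ as the composition of two one-fold Darboux transformations and to extract both the determinant representation and the coefficient formulas from the intertwining identity $\L^{[2]}W_2=W_2\L$. First I would set $W_2=W_1^{[1]}(\lambda_2)\circ W_1(\lambda_1)$, where $W_1(\lambda_1)=1-\frac{\phi_1}{\La^{-1}\phi_1}\La^{-1}$ and $W_1^{[1]}(\lambda_2)=1-\frac{\phi_2^{[1]}}{\La^{-1}\phi_2^{[1]}}\La^{-1}$ is built from the once-transformed wave function $\phi_2^{[1]}=W_1(\lambda_1)\phi_2$. Each factor is monic with a single $\La^{-1}$ tail, so the product is automatically of the stated shape $1+t_1^{[2]}\La^{-1}+t_2^{[2]}\La^{-2}$. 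Moreover $W_2\phi_1=W_1^{[1]}(\lambda_2)\big(W_1(\lambda_1)\phi_1\big)=0$ because $W_1(\lambda_1)$ kills $\phi_1$, and $W_2\phi_2=W_1^{[1]}(\lambda_2)\phi_2^{[1]}=0$ because the second factor kills its own seed $\phi_2^{[1]}$; these two kernel conditions are what pin down $W_2$.

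Next I would turn the kernel conditions into the determinant form. Writing them out, $\phi_i+t_1^{[2]}\phi_i(x-\ep)+t_2^{[2]}\phi_i(x-2\ep)=0$ for $i=1,2$ is a $2\times2$ linear system for $(t_1^{[2]},t_2^{[2]})$ whose coefficient determinant is $\Delta_2$. Cramer's rule yields $t_1^{[2]}$ and $t_2^{[2]}$ as ratios of $2\times2$ determinants, and these are precisely the cofactors produced by expanding $\T_2$ along its operator row $(1,\La^{-1},\La^{-2})$; since the cofactor of the entry $1$ is exactly the minor $\Delta_2$, the $\La^0$ coefficient of $\T_2/\Delta_2$ equals $1$, confirming both the normalisation and $W_2=\T_2/\Delta_2$. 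As an independent check, substituting $\phi_i$ for the operator entries of $\T_2$ makes its first row coincide with the $i$-th data row, so the determinant vanishes, reproving $W_2\phi_i=0$.

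Finally I would read off the transformed potentials from $\L^{[2]}W_2=W_2\L$. Expanding both products and collecting the coefficient of $\La^{i}$ gives
\begin{equation}
\notag u_i+\sum_{k=1}^{2}t_k^{[2]}\,u_{i+k}(x-k\ep)=u_i^{[2]}+\sum_{k=1}^{2}u_{i+k}^{[2]}\,t_k^{[2]}(x+(i+k)\ep),
\end{equation}
which, solved for $u_i^{[2]}$, is the first stated recursion; the upper limit $\min(N,i+2)$ appears because $u_N=1$ and $u_j=0$ for $j>N$, and the case $i=N-1$ produces $u_{N-1}^{[2]}$. Collecting instead the coefficient of $\La^{i-2}$ and solving for the (now highest-index) unknown $u_i^{[2]}$ produces the alternative representation, the division by $t_2^{[2]}(x+i\ep)$ being exactly the coefficient that multiplies $u_i^{[2]}$ there, and $\max(-M,i-2)$ arising from $u_j=0$ for $j<-M$. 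The bottom coefficient $\La^{-M-2}$ is special: only the single product survives on each side, giving $t_2^{[2]}(x)(\La^{-2}u_{-M})=u_{-M}^{[2]}\,t_2^{[2]}(x-M\ep)$ and hence the conjugation form for $u_{-M}^{[2]}$.

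The main obstacle I expect is the shift-and-index bookkeeping in this last step: keeping the arguments $x\pm k\ep$ straight on the two sides of $\L^{[2]}W_2=W_2\L$, confirming that the truncation bounds $\min(N,i+2)$ and $\max(-M,i-2)$ follow uniformly from the finite band structure of $\L$, and checking that the two extreme coefficients collapse to the displayed closed forms rather than to generic three-term recursions. The determinant step is essentially Cramer's rule, so the conceptual load is light and the real labour is the combinatorial accounting of the $\La$-powers.
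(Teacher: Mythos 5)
Your proposal is correct and follows essentially the same route as the paper, which obtains the two-fold transformation by iterating the one-fold Darboux transformation and asserting the result "by direct calculation"; you simply make explicit the two steps the paper leaves implicit, namely the kernel conditions $W_2\phi_i=0$ solved by Cramer's rule to get $W_2=\T_2/\Delta_2$, and coefficient matching in $\L^{[2]}W_2=W_2\L$ (including the band-truncation bounds $\min(N,i+2)$, $\max(-M,i-2)$ and the collapsed top and bottom coefficients) to get the potentials. One small point worth flagging: your computation at $i=N-1$ correctly gives $u_{N-1}^{[2]}=u_{N-1}-(\La^N-1)t_1^{[2]}$, which agrees with the theorem's general recursion (and with composing two one-fold transformations, where $t_1^{[2]}$ is minus a sum of positive ratios) but is opposite in sign to the theorem's first displayed line, an internal inconsistency in the statement itself that your write-up should note rather than silently reproduce.
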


Similarly, we can generalize the Darboux transformation to $n$-fold case which is contained in the following theorem.

\begin{theorem}\label{ndarboux}
The $n$-fold  Darboux transformation of EBTH equation is as following
\[W_n=1+t_1^{[n]}\Lambda^{-1}+t_2^{[n]}\Lambda^{-2}+\dots+t_{n}^{[n]}\Lambda^{-n}=\frac{1}{\Delta_n}\T_n\]
where

\[ \notag \Delta_n&=&
\left|\begin{matrix}\phi_{1}(x-\epsilon)&\phi_{1}(x-2\epsilon)&\phi_{1}(x-3\epsilon)&\dots&\phi_{1}(x-n\epsilon)\\
\phi_{2}(x-\epsilon)&\phi_{2}(x-2\epsilon)&\phi_{2}(x-3\epsilon)&\dots&\phi_{2}(x-n\epsilon)\\
\phi_{3}(x-\epsilon)&\phi_{3}(x-2\epsilon)&\phi_{3}(x-3\epsilon)&\dots&\phi_{3}(x-n\epsilon)\\
\vdots&\vdots&\vdots&\vdots&\vdots\\
\phi_{n}(x-\epsilon)&\phi_{n}(x-2\epsilon)&\phi_{n}(x-3\epsilon)&\dots&\phi_{n}(x-n\epsilon)\\
\end{matrix}\right|\]
 \[\notag\T_n&=&
\left|\begin{matrix}1&\Lambda^{-1}&\Lambda^{-2}&\Lambda^{-3}&\dots&\Lambda^{-n}\\
\phi_{1}(x)&\phi_{1}(x-\epsilon)&\phi_{1}(x-2\epsilon)&\phi_{1}(x-3\epsilon)&\dots&\phi_{1}(x-n\epsilon)\\
\phi_{2}(x)&\phi_{2}(x-\epsilon)&\phi_{2}(x-2\epsilon)&\phi_{2}(x-3\epsilon)&\dots&\phi_{2}(x-n\epsilon)\\
\phi_{3}(x)&\phi_{3}(x-\epsilon)&\phi_{3}(x-2\epsilon)&\phi_{3}(x-3\epsilon)&\dots&\phi_{3}(x-n\epsilon)\\
\vdots&\vdots&\vdots&\vdots&\vdots\\
\phi_{n}(x)&\phi_{n}(x-\epsilon)&\phi_{n}(x-2\epsilon)&\phi_{n}(x-3\epsilon)&\dots&\phi_{n}(x-n\epsilon)\\\end{matrix}\right|.\]

The Darboux transformation leads to new solutions form seed solutions
\[u_{N-1}^{[n]}&=&u_{N-1}+(\La^N-1)t_1^{[n]},\\ \notag
\dots&& \dots\\
u_{i}^{[n]}&=&\sum_{j=i}^{min{(N,i+n)}}u_{j}(x+(i-j)\epsilon)t_{j-i}^{[n]}-\sum_{j=i+1}^{min{(N,i+n)}}u_{j}^{[n]}t_{j-i}^{[n]}(x+j\epsilon),\\ \notag
\dots&& \dots,\\
u_{-M}^{[n]}&=&t_n^{[n]}(x)(\La^{-n}u_{-M})t_n^{[n]-1}(x-M\epsilon),\]

where $u_{i}^{[n]}$ can have another representation as
\[\notag
u_{i}^{[n]}
&=&\frac{\sum_{j=max{(-M,i-n)}}^iu_{j}(x+(i-n-j)\epsilon)t_{j-i+n}^{[n]}-\sum_{j=max{(-M,i-n)}}^{i-1}u_{j}^{[n]}t_{j-i+n}^{[n]}(x+j\epsilon)}{t_{n}^{[n]}
(x+i\epsilon)}.\\\]
\end{theorem}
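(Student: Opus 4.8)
The plan is to prove the $n$-fold formula by induction on $n$, taking the one-fold transformation of the first theorem as the base case and exploiting the fact that a composition of Darboux transformations is again a Darboux transformation. Since each one-fold factor $W_1^{[j]} = 1 - \frac{\phi_j^{[j-1]}}{\La^{-1}\phi_j^{[j-1]}}\La^{-1}$ preserves the bigraded Lax structure and the dressing equation (exactly the content of the first theorem applied to the $(j-1)$-th solution), the composed operator $W_n = W_1^{[n]}\cdots W_1^{[1]}$ is automatically a valid Darboux transformation. The real work is therefore twofold: first, to collapse this iterated product into the closed form $W_n = \Delta_n^{-1}\T_n$; and second, to read off the explicit solution formulas for $u_i^{[n]}$.

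For the first task I would not expand the iterated product directly; instead I would characterize $W_n$ abstractly. Being a product of $n$ monic first-order factors in $\La^{-1}$, the operator $W_n$ must be a \emph{monic} difference operator of the form $1 + t_1^{[n]}\La^{-1} + \dots + t_n^{[n]}\La^{-n}$, and since each factor $W_1^{[j]}$ annihilates $\phi_j^{[j-1]}$, the composite kills every seed wave function, $W_n\phi_i = 0$ for $i = 1,\dots,n$ (this uses that $\phi_i^{[j-1]} = 0$ for $i \le j-1$, noted in the excerpt, which guarantees the iteration is well defined). These two properties determine $W_n$ uniquely as long as the Wronskian-type determinant $\Delta_n$ is nonzero. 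I would then verify that $\Delta_n^{-1}\T_n$ has exactly these two properties: expanding $\T_n$ along its operator-valued first row, the cofactor of the entry $\La^0 = 1$ is precisely $\Delta_n$, so $\Delta_n^{-1}\T_n$ is monic; and applying $\T_n$ to $\phi_i$, that is, replacing each symbol $\La^{-k}$ by $\phi_i(x-k\epsilon)$, makes the first row coincide with the $(i+1)$-th row, so the determinant vanishes. By uniqueness this forces $W_n = \Delta_n^{-1}\T_n$, and Cramer's rule applied to the kernel system $\sum_{k=0}^n t_k^{[n]}\phi_i(x-k\epsilon)=0$ (with $t_0^{[n]}=1$) identifies each coefficient $t_k^{[n]}$ with the corresponding first-row cofactor ratio.

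For the second task I would use the intertwining relation $\L^{[n]}W_n = W_n\L$, write both sides as difference operators in $\La$, and equate coefficients power by power. Matching the coefficient of $\La^{N-1}$ yields $u_{N-1}^{[n]} = u_{N-1} + (\La^N - 1)t_1^{[n]}$; matching an intermediate power $\La^{i}$ gives the stated recursion for $u_i^{[n]}$, where the summation ranges $\max(-M,i-n)$ and $\min(N,i+n)$ arise simply because $W_n$ carries $\La$-degrees between $-n$ and $0$ while $\L$ runs between $-M$ and $N$; and matching the lowest power $\La^{-M-n}$ gives $u_{-M}^{[n]} = t_n^{[n]}(\La^{-n}u_{-M})(t_n^{[n]})^{-1}(x-M\epsilon)$. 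The two equivalent representations of $u_i^{[n]}$ then correspond to reading the operator identity from the top ($\La^N$) end versus the bottom ($\La^{-M}$) end, reflecting the two dressing directions $\P_L$ and $\P_R$.

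The step I expect to be the main obstacle is the rigorous justification that the iterated product collapses to $\Delta_n^{-1}\T_n$: specifically the uniqueness argument, requiring the nonvanishing of $\Delta_n$ and of each intermediate pivot $\La^{-1}\phi_j^{[j-1]}$, together with the bookkeeping that matches the Cramer's-rule solution of the kernel system to the first-row Laplace expansion of an operator-valued determinant. The other delicate point is confirming that $\L^{[n]}$ remains \emph{strictly} of bigraded type, truncating at $\La^{-M}$ with nonvanishing $u_{-M}^{[n]}$ rather than acquiring lower tails from the infinite series $W_n^{-1}$; here I would lean on the inductive preservation supplied by the one-fold case rather than attempt a direct estimate on $W_n^{-1}$.
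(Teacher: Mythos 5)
Your proposal is correct and follows essentially the same route as the paper: Theorem \ref{ndarboux} is obtained there by iterating the one-fold transformation (whose dressing property was established via Lemma \ref{lema}), characterizing the resulting monic composite through its kernel property $W_n\phi_i=0$, $i=1,\dots,n$, so that the determinant ratio $\Delta_n^{-1}\T_n$ follows, and extracting the formulas for $u_i^{[n]}$ by matching coefficients of powers of $\La$ in $\L^{[n]}W_n=W_n\L$ (the two stated representations of $u_i^{[n]}$ coming from the $\La^{i}$ and $\La^{i-n}$ coefficients, exactly as you say). The paper presents this only as a ``direct calculation,'' so your uniqueness-plus-Cramer argument, with the genericity caveat $\Delta_n\neq 0$, fills in the omitted details rather than constituting a different method.
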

It can be easily checked that $W_n\phi_i=0,\ i=1,2,\dots,n.$

Taking seed solution $u_{N-1}=u_{N-2}=\dots =u_{-M+1}=0,u_{-M}=1$, then using Theorem \ref{ndarboux},  one can get the $n$-th new solution of the EBTH as

\[u_{N-1}^{[n]}&=&(\La^{N-1}-\La^{-1})\d_{t_{N-1,0}}\log Wr(\phi_1,\phi_2,\dots\phi_n),\\ \notag
\dots &&\dots\\
u_{-M}^{[n]}&=&e^{(1-\La^{-1})(1-\La^{-M})\log Wr(\phi_1,\phi_2,\dots\phi_n)},\]

where $Wr(\phi_1,\phi_2,\dots\phi_n)$ is the discrete Wronskian
\[Wr(\phi_1,\phi_2,\dots\phi_n)=det (\La^{-j+1} \phi_{n+1-i})_{1\leq i,j\leq n}.\]
Particularly for the $(N,N)$-EBTH, choosing appropriate wave function $\phi$, the $n$-th new solutions can be solitary wave solutions, i.e. $n$-soliton solutions.

\section{Soliton solutions}

After above preparation over the first Darboux transformation, in this section, we will use the first Darboux transformation of the EBTH to generate new solutions from trivial seed solutions. In particular, when $N=M$, some soliton solutions will be shown using the first Darboux transformation.

To give a nice solution, one need to rewrite the extended flows  in the Lax equations of the EBTH in the following lemma.

\begin{lemma}\label{modifiedLax}

The extended flows in Lax formulation of the extended bigraded Toda hierarchy can be equivalently given
by
\begin{equation}
  \label{edef2}
\frac{\partial \L}{\partial t_{-M,n}} = [\bar  A_{-M,n} ,\L ],
\end{equation}
\begin{align}
  &\bar A_{-M,n} = \frac{2}{\epsilon n!} [ \L^n (\log \L - \frac12 ( \frac1M + \frac1N) c_n ) ]_+-\frac{1}{\epsilon n!} [ \L^n (\log_- \L - \frac12 ( \frac1M + \frac1N) c_n ) ],
\end{align}
which can also be rewritten in the form
\begin{align} \notag
  \bar A_{-M,n} &= \frac{1}{\epsilon n!}  \L^n\epsilon \d+\frac{1}{\epsilon n!} [ \L^n ( 2 M \sum_{k < 0} W_k(x) \Lambda^k  - \frac12 ( \frac1M + \frac1N) c_n ) ]_+\\
  &-\frac{1}{\epsilon n!} [ \L^n ( 2 M \sum_{k \geq 0} W_k(x) \Lambda^k  - \frac12 ( \frac1M + \frac1N) c_n ) ]_-.
\end{align}

\end{lemma}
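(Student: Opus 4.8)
The plan is to establish the two assertions of the lemma in turn: first, that the modified generator $\bar A_{-M,n}$ yields exactly the Lax flow $\partial_{t_{-M,n}}\L$ defined through $A_{-M,n}$ in Definition \ref{deflax}; and second, that $\bar A_{-M,n}$ coincides with the explicit differential-difference expression displayed at the end. The flow equivalence is the substantive point and collapses to a single commutator identity, while the explicit expansion is a rearrangement.

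For the flow equivalence I would subtract the two generators. Comparing the first formula for $\bar A_{-M,n}$ with $A_{-M,n}$ of Definition \ref{deflax} gives
\begin{equation*}
\bar A_{-M,n}-A_{-M,n}=-\frac{1}{\ep n!}\,\L^{n}\Bigl(\log_-\L-\tfrac12(\tfrac1M+\tfrac1N)c_n\Bigr),
\end{equation*}
a full, unprojected operator. Since the Lax equation only records the bracket with $\L$, it suffices to show this difference commutes with $\L$. The constant $\tfrac12(\tfrac1M+\tfrac1N)c_n$ is central and $[\L^{n},\L]=0$ trivially, so the whole matter reduces to $[\log_-\L,\L]=0$. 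This I would read off from the dressing form recalled in Section 2: since $\log_-\L=-M\ep\,\P_R\d\,\P_R^{-1}$ and $\L=\P_R\La^{-M}\P_R^{-1}$, conjugation by $\P_R$ turns the bracket into $-M\ep\,\P_R[\d,\La^{-M}]\P_R^{-1}$, and $[\d,\La^{-M}]=0$ because $\La^{-M}=e^{-M\ep\d}$ is a function of $\d$ alone. Therefore $[\bar A_{-M,n}-A_{-M,n},\L]=0$, hence $[\bar A_{-M,n},\L]=[A_{-M,n},\L]$, which is exactly \eqref{edef2}.

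For the explicit second form I would substitute the series for the logarithms into the first formula. Feeding $\log_-\L=-M\ep\d+2M\sum_{k\geq0}W_k\La^{k}$ into the unprojected term splits it into a purely differential piece proportional to $\L^{n}\ep\d$ and a pure difference operator; using in addition $\log\L=\frac1{2N}\log_+\L+\frac1{2M}\log_-\L$ together with the identifications $(\log\L)_+=\sum_{k\geq0}W_k\La^{k}$ and $(\log\L)_-=\sum_{k<0}W_k\La^{k}$, one then sorts the difference part into its non-negative and negative projections to reach the three stated terms. The delicate point, and the step I expect to be the real obstacle, is the projection bookkeeping: because $\L^{n}$ carries both positive and negative shifts one has $(\L^{n}X)_\pm\neq\L^{n}X_\pm$, so $\L^{n}(\log\L)_\pm$ must be expanded and re-collected by hand, with careful attention to the exact coefficient of the surviving $\L^{n}\ep\d$ term and to which projection absorbs the central constant $\tfrac12(\tfrac1M+\tfrac1N)c_n$. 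Before trusting the general identity I would recompute both sides for the $(2,2)$-EBTH of Section 2, where all operators are short, to pin the coefficients down.
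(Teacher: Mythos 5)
Your first paragraph is correct, and it supplies precisely the content the paper omits: the paper's entire proof of Lemma \ref{modifiedLax} is the sentence ``Direct calculations will lead to the lemma,'' with a pointer to the ETH case of \cite{carletthesis}. Your reduction is the right one: the difference $\bar A_{-M,n}-A_{-M,n}=-\frac{1}{\epsilon n!}\L^n\bigl(\log_-\L-\frac12(\frac1M+\frac1N)c_n\bigr)$ is a full, unprojected operator, and since $\L^n\log_-\L=-M\epsilon\,\P_R\La^{-Mn}\partial\,\P_R^{-1}$ while $\L=\P_R\La^{-M}\P_R^{-1}$ and $[\partial,\La^{-M}]=0$, this difference commutes with $\L$; hence \eqref{edef2} holds with the first expression for $\bar A_{-M,n}$. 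That argument is complete as you state it.

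The genuine gap is the second half, which you explicitly defer, and the step you flag as ``the real obstacle'' in fact does not close as printed. Writing $c=\frac12(\frac1M+\frac1N)c_n$, $S_<=\sum_{k<0}W_k\La^k$, $S_{\geq}=\sum_{k\geq0}W_k\La^k$, and substituting $\log_-\L=-M\epsilon\partial+2MS_{\geq}$ and $\log\L=S_<+S_{\geq}$ into the first formula yields
\begin{equation*}
\bar A_{-M,n}=\frac{M}{\epsilon n!}\,\L^n\epsilon\partial+\frac{1}{\epsilon n!}\,[\L^n(2S_<-c)]_+-\frac{1}{\epsilon n!}\,[\L^n(2MS_{\geq}-c)]_-+\frac{2(1-M)}{\epsilon n!}\,[\L^nS_{\geq}]_+\,,
\end{equation*}
which agrees with the displayed second form only when $M=1$, i.e.\ exactly the ETH case the paper's one-line proof cites: for $M>1$ the derivative term carries the factor $M$, the positive projection carries $2S_<$ rather than $2MS_<$, and an extra term survives. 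Your own proposed $(2,2)$ sanity check would expose this immediately: on the trivial seed all $W_k$ vanish, and the first formula gives $\frac{2}{\epsilon}(\La^{2}+\La^{-2})\epsilon\partial-\frac{1}{2\epsilon}(\La^{2}-\La^{-2})$, whereas the second form (the one actually used for $\bar A_{-2,1}$ in Section 4) has coefficient $\frac1\epsilon$ on the derivative piece. Worse, the discrepancy $\frac{M-1}{\epsilon n!}\bigl(\epsilon\L^n\partial-2[\L^n\log\L]_+\bigr)=\frac{M-1}{\epsilon n!}\bigl(-\frac1M\L^n\log_-\L+2[\L^nS_{\geq}]_--2[\L^nS_<]_+\bigr)$ does not commute with $\L$ on a general background (only the $\L^n\log_-\L$ piece does), so for $M>1$ the two displayed operators are not even equivalent as generators of the same flow. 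To make your plan succeed one must first repair the statement, e.g.\ replace $\log_-\L$ by $\frac1M\log_-\L$ in the first formula (after which the second form holds with $2S_<$, $2S_{\geq}$ in place of $2MS_<$, $2MS_{\geq}$), or correct the coefficients of the second form as above; as it stands, no amount of projection bookkeeping will derive the printed identity, and your instinct to pin the coefficients down in the $(2,2)$ case before trusting it was exactly the right one.
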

\begin{proof}
Direct calculations will lead to the lemma. Similar results on the ETH can  be seen in \cite{carletthesis}.\end{proof}

Taking seed solution $u_{N-1}=u_{N-2}=\dots =u_{-M+1}=0,u_{-M}=1$, then the initial wave function $\phi_i$ satisfies
\[\label{initial}\Lambda^{N}\phi_i+\Lambda^{-M}\phi_i=\lambda_i\phi_i,\ 1\leq i\leq n.\]

Under this initial equation, the operator $\bar A_{-M,1}$ in above Lemma \ref{modifiedLax} is in form of
\begin{align}\notag
  \bar A_{-M,1}&= \frac{1}{\epsilon }  (\Lambda^{N}+\Lambda^{-M})\epsilon \d+\frac{1}{\epsilon } [ (\Lambda^{N}+\Lambda^{-M}) ( 2 M \sum_{k < 0} W_k(x) \Lambda^k  - \frac12 ( \frac1M + \frac1N)) ]_+\\ \notag
  &-\frac{1}{\epsilon} [ (\Lambda^{N}+\Lambda^{-M}) ( 2 M \sum_{k \geq 0} W_k(x) \Lambda^k  - \frac12 ( \frac1M + \frac1N) ) ]_-\\
  &=\frac{1}{\epsilon }  (\Lambda^{N}+\Lambda^{-M})\epsilon \d-\frac{1}{2\epsilon } ( \frac1M + \frac1N)(\Lambda^{N}-\Lambda^{-M}).
    \end{align}
    Because the coefficients of the Lax operator are all constants,  all the coefficients $W_k(x), k\in \Z$ must be zero.
    Then the linear equation for $t_{-M,1}$  flow is as
\[\label{t-m}\partial_{t_{-M,1}}\phi&=[\frac{1}{\epsilon }  (\Lambda^{N}+\Lambda^{-M})\epsilon \d-\frac{1}{2\epsilon } ( \frac1M + \frac1N)(\Lambda^{N}-\Lambda^{-M})]\phi.
\]
Then solution $\phi$ of eq.\eqref{t-m} in terms of $x,t_{-M,1}$ can be chosen in the form
\begin{align}
\phi&=\exp(\frac{x +\lambda t_{-M,1}}{\epsilon}\log z+\frac{ t_{-M,1}}{2\epsilon}( \frac1M + \frac1N)(-z^N+z^{-M})),
   \end{align}
where $z^N+z^{-M}=\lambda.$ Here the identity   $z^N+z^{-M}=\lambda$ comes from eq.\eqref{initial}.
To include all the dynamical variables, i.e. considering other flows in eq.\eqref{deflax} of the EBTH, wave function $\phi$ can be chosen in the form
\begin{align}\notag
\phi&=z^{\frac{x}{\epsilon}}\exp(\frac{1}{\epsilon}\sum_{k\geq0}t_{-M,k}[\frac{\lambda^k}{k!}\log z+\frac{1 }{2k!}( \frac1M + \frac1N)(-(\lambda^k)_++(\lambda^k)_-)]+ \\
&\sum_{\alpha=1}^{N-1}\sum_{n\geq0}\frac{\Gamma (2- \frac{\alpha}{N} )}{  \epsilon \Gamma(n+2 -\frac{\alpha}{N} ) } ( \lambda^{n+1-\frac{\alpha}N })_+t_{\alpha,n}+\sum_{\beta=-M+1}^0\sum_{n\geq0}\frac{\Gamma (2+\frac{\alpha}{M} )}{  \epsilon \Gamma(n+2 +\frac{\alpha}{M} ) } ( \lambda^{n+1+\frac{\alpha}M })_+t_{\beta,n}),
   \end{align}
where $z^N+z^{-M}=\lambda,$ and the projection $``\pm"$ is about $z$.

Corresponding to the spectral parameters $\lambda$, one can choose  $ \phi$ as
\begin{align}\notag
\phi&=\sum_{m=1}^{N+M}a_{m}\exp (\frac{x +\frac{\lambda^m}{m!} t_{-M,m}}{\epsilon}\log z_{m}+\frac{ t_{-M,m}}{2\epsilon m!}( \frac1M + \frac1N)(-(\lambda_{}^m)_++(\lambda_{}^m)_-)+ \\ \label{phiNM}
&\sum_{\alpha=1}^{N-1}\sum_{n\geq0}\frac{\Gamma (2- \frac{\alpha}{N} )}{  \epsilon \Gamma(n+2 -\frac{\alpha}{N} ) } ( \lambda^{n+1-\frac{\alpha}N })_+t_{\alpha,n}+ \sum_{\beta=-M+1}^0\sum_{n\geq0}\frac{\Gamma (2+\frac{\alpha}{M} )}{  \epsilon \Gamma(n+2 +\frac{\alpha}{M} ) } ( \lambda^{n+1+\frac{\alpha}M })_+t_{\beta,n}),
   \end{align}
   where $z_{m}^N+z_{m}^{-M}=\lambda_,\  1\leq m\leq N+M,$  and the projection $``\pm"$ is about $z_m$. One can find the projection $( \lambda^{n+1+\frac{\alpha}M })_+$ contains infinite number of terms. Therefore this kind of Darboux transformation is not so good to produce new solutions depending on $t_{\beta,n},-M+1\leq \beta\leq 0 $. That is why we will consider another kind of Darboux transformation which benefits in showing the character of the $t_{\beta,n}, -M+1\leq \beta\leq 0$ flow.

When $N=M$, we can choose $ \phi$ as
\begin{align}\notag
\phi&=\sum_{m=1}^{N}a_{m}cosh (\frac{x +\frac{\lambda^m}{m!} t_{-M,m}}{\epsilon}\log z_{m}+\frac{ t_{-M,m}}{2\epsilon m!}( \frac1M + \frac1N)(-(\lambda_{}^m)_++(\lambda_{}^m)_-)+ \\ \label{phiNN}
&\sum_{\alpha=1}^{N-1}\sum_{n\geq0}\frac{\Gamma (2- \frac{\alpha}{N} )}{  \epsilon \Gamma(n+2 -\frac{\alpha}{N} ) } ( \lambda^{n+1-\frac{\alpha}N })_+t_{\alpha,n}+ \sum_{\beta=-M+1}^0\sum_{n\geq0}\frac{\Gamma (2+\frac{\alpha}{M} )}{  \epsilon \Gamma(n+2 +\frac{\alpha}{M} ) } ( \lambda^{n+1+\frac{\alpha}M })_+t_{\beta,n}).
   \end{align}
   Then the generated new solutions of the $(N,M)$-EBTH form trivial solutions using the wave function $\phi$ in eq.\eqref{phiNM} are as
\[\label{uN}u_{N-1}^{[1]}&=&(\La^N-1)\frac{\phi_1}{\La^{-1}\phi_1},\\ \notag
\dots &&\dots\\ \label{uM}
u_{-M}^{[1]}&=&\frac{\phi_1}{\La^{-1}\phi_1}\frac{\La^{-M-1}\phi_1}{\La^{-M}\phi_1}.\]
   The wave function $\phi$ in eq.\eqref{phiNN} will be produced into one-soliton solution of the $(N,N)$-EBTH after transformation eq.\eqref{uN}-eq.\eqref{uM}.
Supposing $N=M=1$, the solutions will be exactly the soliton solutions mentioned in \cite{carletthesis}.
To see the solutions of the EBTH clearly, we will check the equations of  $(2,2)$-EBTH, i.e. $N=M=2$ and prove the solutions after Darboux transformation are also solutions of the $(2,2)$-EBTH.

\subsection{Soliton solutions of the $(2,2)$-EBTH}
 For the $(2,2)$-EBTH, to include all the dynamical variables,
one can choose wave function  $\phi$ in a form of a combination of
\begin{align}\label{phi22}
\phi&=a\bar\phi(z)+b\bar\phi(\frac 1z),
   \end{align}
where
\begin{align}\notag
\bar\phi&=\cosh (\frac{1}{\epsilon}\sum_{k\geq0}[(x+\frac{\lambda^kt_{-2,k}}{k!})\log z+\frac{t_{-2,k} }{2k!}(-(\lambda^k)_++(\lambda^k)_-)]\\ \label{phi22}
&+ \sum_{\alpha=0}^1\sum_{n\geq0}\frac{\Gamma (2- \frac{\alpha}{2} )}{  \epsilon \Gamma(n+2 -\frac{\alpha}{2} ) } ( \lambda^{n+1-\frac{\alpha}2 })_+t_{\alpha,n}+ \sum_{n\geq0}\frac{\Gamma (2-\frac{1}{2} )}{  \epsilon \Gamma(n+2 -\frac{1}{2} ) } ( \lambda^{n+1-\frac{1}2 })_+t_{-1,n}),
   \end{align}
where $z^2+z^{-2}=\lambda,$ and the projection $\pm$ is about the powers of $z$. In the coefficients of $t_{1,n},t_{-1,n}$ in eq.\eqref{phi22}, the square roots of $\lambda$ means the expansion around $\infty$ and $0$ respectively.

Then the soliton solution generated from eq.\eqref{phi22} is as
\[\label{22u1}u_1^{[1]}&=&(\La^2-1)\frac{\phi_1}{\La^{-1}\phi_1},\\
u_{0}^{[1]}&=&u_{1}^{[1]}(\La\frac{\phi_1}{\La^{-1}\phi_1}),\\
u_{-1}^{[1]}&=&u_0^{[1]}(\frac{\phi_1}{\La^{-1}\phi_1}),\\ \label{22u-2}
u_{-2}^{[1]}&=&\frac{\phi_1}{\La^{-1}\phi_1}\frac{\La^{-3}\phi_1}{\La^{-2}\phi_1}.\]

Choose $a=b=1,\epsilon=1$, then

   \begin{align}\label{22simple}
\phi&=\cosh (\xi(x,t,z))+\cosh (\xi(x,t,\frac1z)),\ \ \\
\xi(x,t,z)&:=[(x+(z^2+z^{-2}) t_{-2,1})\log z+\frac{t_{-2,1} }{2}(-z^2+z^{-2})]+  z^2t_{0,0}+zt_{1,0}.
   \end{align}
Bringing $\phi_1=\phi|_{z=z_1}$ in \eqref{22simple} into the above identities \eqref{22u1}-\eqref{22u-2} will lead to specific form of the new solutions

\begin{eqnarray*}u_1^{[1]}&=&(\La^2-1)\frac{\cosh (\xi(x,t,z_1))+\cosh (\xi(x,t,\frac1{z_1}))}{\cosh (\xi(x-1,t,z_1))+\cosh (\xi(x-1,t,\frac1{z_1}))},\\
u_{0}^{[1]}&=&((\La^2-1)\frac{\cosh (\xi(x,t,z_1))+\cosh (\xi(x,t,\frac1{z_1}))}{\cosh (\xi(x-1,t,z_1))+\cosh (\xi(x-1,t,\frac1{z_1}))})\\
&&\times\frac{\cosh (\xi(x+1,t,z_1))+\cosh (\xi(x+1,t,\frac1{z_1}))}{\cosh (\xi(x,t,z_1))+\cosh (\xi(x,t,\frac1{z_1}))},\\
u_{-1}^{[1]}&=&((\La^2-1)\frac{\cosh (\xi(x,t,z_1))+\cosh (\xi(x,t,\frac1{z_1}))}{\cosh (\xi(x-1,t,z_1))+\cosh (\xi(x-1,t,\frac1{z_1}))})\\
&&\times\frac{\cosh (\xi(x+1,t,z_1))+\cosh (\xi(x+1,t,\frac1{z_1}))}{\cosh (\xi(x-1,t,z_1))+\cosh (\xi(x-1,t,\frac1{z_1}))},\\
u_{-2}^{[1]}&=&\frac{\cosh (\xi(x,t,z_1))+\cosh (\xi(x,t,\frac1{z_1}))}{\cosh (\xi(x-1,t,z_1))+\cosh (\xi(x-1,t,\frac1{z_1}))}\frac{\cosh (\xi(x-3,t,z_1))+\cosh (\xi(x-3,t,\frac1{z_1}))}{\cosh (\xi(x-2,t,z_1))+\cosh (\xi(x-2,t,\frac1{z_1}))},
\end{eqnarray*}
whose graphs are as Fig.\ref{u1} in appendix.

One can check that the new solutions satisfy primary flow equations \eqref{10flow},\eqref{00flow},\eqref{-20flow},\eqref{edef2}.
For the above soliton solutions, we can find the velocity of the $t_{-2,1}$ flow is as $-\frac{(z^2+z^{-2}) \log z+\frac{1}{2}(-z^2+z^{-2})}{\log z}$, the velocity of the $t_{0,0}$ flow is as $-\frac{z^2}{\log z}$ and the velocity of the $t_{1,0}$ flow is as $-\frac{z}{\log z}$. From Fig.\ref{V} in appendix, we can see the velocity of $t_{-2,1}$ flow which is quite different from the velocity of the $t_{0,0},t_{1,0}$ flow which can also be seen from the direction of the soliton graphs. When $z$ go to $+\infty$, the velocity becomes $-\infty$, which means the slop of the soliton will approach to be parallel to $x$ axis.

Obviously,  the velocity of the extended flow is quite different from other flows.
Of course, one can get higher order flows in the Lax equation eq.\eqref{edef2}, i.e. $t_{\gamma,n}, n\geq 1$ flows, the new higher-order soliton solutions   and prove that the new solutions and other iterated n-soliton solutions  can also satisfy them which will be not given in detail.

In the next section, it is time to introduce another Darboux transformation of the EBTH basing on another linear equation from eq.\eqref{linearequation}.

\section{The second Darboux transformation of the EBTH}

In fact the EBTH system can also be equivalently rewritten in form of the following linear differential system
\[\label{blinearequation}
\begin{cases}
 \L\psi&=\lambda\psi,\\
\frac{\partial \psi}{\partial t_{\alpha, n}}&=-(B_{\alpha,n})_-\psi,\ \alpha = N-1,N-2, \dots, -M, n \geq 0.\end{cases} \]
We will call the function $\psi$ in eq.\eqref{blinearequation} the second wave function of the EBTH.

In this section, we will consider the Darboux transformation of the EBTH on Lax matrix
 \[\L=\Lambda^{N}+u_{N-1}\Lambda^{N-1}+\dots + u_{-M}
\Lambda^{-M},\]
 i.e.
  \[\label{2dresssing}\L^{[1]}=\Lambda^{N}+u_{N-1}^{[1]}\Lambda^{N-1}+\dots + u_{-M}^{[1]}
\Lambda^{-M}=\bar W\L \bar W^{-1},\]
where $\bar W$ is Darboux transformation operator.

That means after Darboux transformation, the spectral problem

\[\L\psi=\Lambda^{N}\psi+u_{N-1}\Lambda^{N-1}\psi+\dots + u_{-M}
\Lambda^{-M}\psi=\lambda\psi,\]
will become

\[\L^{[1]}\psi^{[1]}=\Lambda^{N}\psi^{[1]}+u_{N-1}^{[1]}\Lambda^{N-1}\psi^{[1]}+\dots + u_{-M}^{[1]}
\Lambda^{-M}\psi^{[1]}=\lambda\psi^{[1]}.\]

To keep the Lax pair eq.\eqref{edef} of the EBTH invariant, i.e.

\begin{equation}
\frac{\partial \L}{\partial t_{\alpha, n}} = [ -(B_{\alpha,n})_- ,\L ],\ \
\frac{\partial \L^{[1]}}{\partial t_{\alpha, n}} = [ -(B_{\alpha,n}^{[1]})_- ,\L^{[1]} ],\ \ B_{\alpha,n}^{[1]}:=B_{\alpha,n}(\L^{[1]}),
\end{equation} the dressing operator $\bar W$ should satisfy following dressing equation
\[\bar W_{t_{\gamma,n}}=\bar W(B_{\gamma,n})_--(\bar WB_{\gamma,n}\bar W^{-1})_-\bar W,\ \ -M \leq \gamma\leq N-1, n\geq 0.\]

\begin{theorem}
If $\psi$ is the second wave function of the EBTH,
the second Darboux transformation operator of the EBTH
 \[\bar W(\lambda)=(\frac{\La\psi}{\psi}-\La)=\psi(x+\epsilon)\circ(1-\La)\circ\psi^{-1}(x),\]

will generater new solutions $u_{i}^{[1]}$ from seed solutions
$u_{i}, -M\leq i\leq N-1$

\[u_{N-1}^{[1]}&=&\La u_{N-1}+(\La^N-1)\frac{\La\psi}{\psi},\\ \notag
\dots&& \dots\\
u_{i}^{[1]}&=&\La u_{i}+u_{i+1}^{[1]}(\La^{i+1}\frac{\La\psi}{\psi})-\frac{\La\psi}{\psi}(\La u_{i+1}),\\ \notag
\dots&& \dots,\\
u_{-M}^{[1]}&=&\frac{\La\psi}{\psi} u_{-M}\frac{\La^{-M}\psi}{\La^{-M+1}\psi}.\]

\end{theorem}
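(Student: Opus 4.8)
The plan is to prove this exactly as the first Darboux transformation was proved, interchanging the first wave function $\phi$ with the second wave function $\psi$, the operator $W$ with $\bar W$, and the flow $\phi_{t_{\gamma,n}}=(B_{\gamma,n})_+\phi$ with $\psi_{t_{\gamma,n}}=-(B_{\gamma,n})_-\psi$. Everything rests on verifying that $\bar W=\psi(x+\epsilon)\circ(1-\La)\circ\psi^{-1}$ obeys the dressing equation
\begin{equation*}
\bar W_{t_{\gamma,n}}=\bar W(B_{\gamma,n})_- -(\bar W B_{\gamma,n}\bar W^{-1})_-\bar W ,
\end{equation*}
equivalently, after multiplying on the right by $\bar W^{-1}=\psi\circ(1-\La)^{-1}\circ\psi^{-1}(x+\epsilon)$,
\begin{equation*}
\bar W_{t_{\gamma,n}}\bar W^{-1}=\bar W(B_{\gamma,n})_-\bar W^{-1}-(\bar W B_{\gamma,n}\bar W^{-1})_- .
\end{equation*}
First I would do the order bookkeeping: $\bar W$ contains only the powers $\La^{0}$ and $\La^{1}$, while $\bar W^{-1}$ carries only powers $\La^{k}$ with $k\le -1$; since conjugation by $\bar W$ raises the order by at most one and $\bar W^{-1}$ lowers it correspondingly, $\bar W(B_{\gamma,n})_-\bar W^{-1}$ has only negative-order terms. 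Hence the first term on the right is its own negative projection, and the whole right-hand side collapses to $-(\bar W(B_{\gamma,n})_+\bar W^{-1})_-$, which is the exact mirror of the identity $W_{t_{\gamma,n}}W^{-1}=-(W(B_{\gamma,n})_+W^{-1})_-$ that drove the first theorem.

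Next I would compute $\bar W_{t_{\gamma,n}}\bar W^{-1}$ by hand. Differentiating $\bar W=\psi(x+\epsilon)(1-\La)\psi^{-1}$ and inserting $\psi_{t_{\gamma,n}}=-(B_{\gamma,n})_-\psi$ produces two pieces. The constant factor $(1-\La)(1-\La)^{-1}=1$ collapses the first piece to the zeroth-order multiplication operator $-\frac{((B_{\gamma,n})_-\psi)(x+\epsilon)}{\psi(x+\epsilon)}$, while the second becomes $\psi(x+\epsilon)(1-\La)\big[\psi^{-1}((B_{\gamma,n})_-\psi)\big](1-\La)^{-1}\psi^{-1}(x+\epsilon)$. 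A short order count shows that the zeroth-order part of this second operator cancels the first piece exactly, so the two combine into a single negative projection. The crucial move is then to replace the function $\psi^{-1}((B_{\gamma,n})_-\psi)$ standing in the middle by the operator $\psi^{-1}(B_{\gamma,n})_-\psi$ without altering the projection; this is precisely the role of the counterpart identity \eqref{Cpos} in Lemma \ref{lema} (the one parallel to \eqref{Bneg} used for the first transformation), together with the spectral relation $\L\psi=\lambda\psi$, which lets me trade $(B_{\gamma,n})_-\psi$ for $(B_{\gamma,n})_+\psi$ modulo the scalar eigenvalue. The computation then lands on $-(\bar W(B_{\gamma,n})_+\bar W^{-1})_-$, matching the reduced right-hand side and establishing that $\bar W$ is a Darboux transformation.

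Finally, the explicit formulas for the transformed fields involve no projections, and I would obtain them by coefficient comparison. Writing $\L^{[1]}\bar W=\bar W\L$ with $\bar W=\frac{\La\psi}{\psi}-\La$ and matching the coefficients of $\La^{N},\La^{N-1},\dots,\La^{-M}$ from the top downward gives a triangular recursion: the $\La^{N}$-level yields $u_{N-1}^{[1]}=\La u_{N-1}+(\La^{N}-1)\frac{\La\psi}{\psi}$, the intermediate levels give $u_{i}^{[1]}=\La u_{i}+u_{i+1}^{[1]}(\La^{i+1}\frac{\La\psi}{\psi})-\frac{\La\psi}{\psi}(\La u_{i+1})$, and the bottom level gives $u_{-M}^{[1]}=\frac{\La\psi}{\psi}\,u_{-M}\,\frac{\La^{-M}\psi}{\La^{-M+1}\psi}$. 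I expect the only genuine obstacle to be the projection-reordering step of the middle paragraph: one must fix the expansion direction of $\bar W^{-1}$ consistently, so that the middle-function-to-operator replacement is legitimate and no spurious positive powers are introduced. This is exactly where the passage from the first wave function to the second, and correspondingly from identity \eqref{Bneg} to \eqref{Cpos}, genuinely enters, and it is the algebraic feature that makes the second Darboux transformation structurally different from the first.
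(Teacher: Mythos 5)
Your overall strategy coincides with the paper's: differentiate $\bar W=\psi(x+\epsilon)\circ(1-\La)\circ\psi^{-1}$, insert $\psi_{t_{\gamma,n}}=-(B_{\gamma,n})_-\psi$, convert the resulting middle \emph{function} into the middle \emph{operator} via Lemma \ref{lema}, and then obtain the formulas for $u_i^{[1]}$ by coefficient comparison in $\L^{[1]}\bar W=\bar W\L$ (that last paragraph of yours is correct and is exactly how the paper gets them). But the key projection step of your middle paragraph fails, in two linked places. First, the expansion direction: identity \eqref{Cpos} is only valid when $\frac{1}{1-\La}$ is expanded in \emph{nonnegative} powers of $\La$ (its proof rests on $(\La^{-m}\frac{1}{1-\La})_+=\frac{1}{1-\La}$ for $m\geq 0$, which is false for the expansion $\frac{1}{1-\La}=-\La^{-1}-\La^{-2}-\cdots$). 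Hence $\bar W^{-1}=\psi\circ(1-\La)^{-1}\circ\psi^{-1}(x+\epsilon)$ must carry only powers $\La^k$ with $k\geq 0$ --- the opposite of your bookkeeping claim that it carries only $k\leq -1$. Under the correct expansion, $\bar W(B_{\gamma,n})_+\bar W^{-1}$ is purely nonnegative, so $(\bar W B_{\gamma,n}\bar W^{-1})_-=(\bar W(B_{\gamma,n})_-\bar W^{-1})_-$ and the right-hand side of the dressing equation collapses to $(\bar W(B_{\gamma,n})_-\bar W^{-1})_+$, a \emph{positive} projection --- the genuine mirror of the $-(W(B_{\gamma,n})_+W^{-1})_-$ of the first theorem. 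Your claimed reduced form $-(\bar W(B_{\gamma,n})_+\bar W^{-1})_-$ is, under the expansion that \eqref{Cpos} requires, identically zero, which would render the whole verification vacuous: you flipped the wave function but not the projection.

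Second, the ``spectral trade'' you propose --- using $\L\psi=\lambda\psi$ to replace $(B_{\gamma,n})_-\psi$ by $(B_{\gamma,n})_+\psi$ modulo a scalar --- would require $B_{\gamma,n}\psi=c(\lambda)\psi$, and this does \emph{not} follow from $\L\psi=\lambda\psi$: apart from $\gamma=0$ (integer powers of $\L$), the generators $B_{\gamma,n}$ involve the fractional powers $\L^{n+1-\frac{\alpha}{N}}$, $\L^{n+1+\frac{\alpha}{M}}$ or $\log\L$, and $\psi$ is not their eigenfunction; indeed the actual wave functions such as \eqref{bphiNM} are sums over the $N+M$ branches $z_m$ of $z^N+z^{-M}=\lambda$, each branch carrying a different eigenvalue of the fractional power. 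The paper's proof needs no such trade: once the nonnegative expansion is fixed, the zeroth-order piece $-\bigl((B_{\gamma,n})_-\psi\bigr)(x+\epsilon)\,\psi^{-1}(x+\epsilon)$ coming from the first derivative term is absorbed exactly when the two terms are combined, \eqref{Cpos} applies directly to the negative operator $C=(B_{\gamma,n})_-$ with the $+$ projection, and one lands on $\bar W(B_{\gamma,n})_-\bar W^{-1}-(\bar W B_{\gamma,n}\bar W^{-1})_-$ without ever invoking the spectral relation. Repair these two points --- take the nonnegative expansion of $\bar W^{-1}$, drop the eigenvalue argument, and aim at the positive projection --- and your proof becomes the paper's.
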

\begin{proof}

Using eq.\eqref{Cpos} in Lemma \ref{lema}, a direct computation will lead to the following
 \begin{eqnarray*}\bar W_{t_{\gamma,n}}\bar W^{-1}&=&(\psi(x+\epsilon)\circ(1-\La)\circ\psi^{-1})_{t_{\gamma,n}}\psi\circ(1-\La)^{-1}\circ\psi^{-1}(x+\epsilon)\\
 &=&-(((B_{\gamma,n})_-\psi(x+\epsilon))\circ(1-\La)\circ\psi^{-1})\psi\circ(1-\La)^{-1}\circ\psi^{-1}(x+\epsilon)\\
 &&+\psi(x+\epsilon)\circ(1-\La)\circ((B_{\gamma,n})_-\psi)\psi^{-1}\circ(1-\La)^{-1}\circ\psi^{-1}(x+\epsilon)\\
 &=&-((B_{\gamma,n})_-\psi(x+\epsilon))\psi^{-1}(x+\epsilon)\\
 &&+\psi(x+\epsilon)\circ(1-\La)\circ((B_{\gamma,n})_-\psi)\psi^{-1}\circ(1-\La)^{-1}\circ\psi^{-1}(x+\epsilon)\\
&=&-\psi(x+\epsilon)\circ [(\La-1)\cdot(((B_{\gamma,n})_-\psi)\psi^{-1})]\La\circ(1-\La)^{-1}\circ\psi^{-1}(x+\epsilon)\\
  &=&-(\psi(x+\epsilon)\circ[(\La-1)\cdot(\psi^{-1}(B_{\gamma,n})_-\circ\psi)]\La\circ(1-\La)^{-1}\circ\psi^{-1}(x+\epsilon))_+\\
    &=&\psi(x+\epsilon)\circ(1-\La)\circ\psi^{-1}(x)\circ (B_{\gamma,n})_-(x)\circ\psi(x)\circ(1-\La)^{-1}\circ\psi^{-1}(x+\epsilon)\\
  &&-(\psi(x+\epsilon)\circ(1-\La)\circ\psi^{-1}(x)\circ B_{\gamma,n}(x)\circ\psi(x)\circ(1-\La)^{-1}\circ\psi^{-1}(x+\epsilon))_-\\
  &=&\bar W(B_{\gamma,n})_-\bar W^{-1}-(\bar WB_{\gamma,n}\bar W^{-1})_-.
   \end{eqnarray*}
 Therefore  \[\bar W=\psi(x+\epsilon)\circ(1-\La)\circ\psi^{-1},\]
 can be as another Darboux transformation of the EBTH.
 The new solutions can be got easily using the second dressing form eq.\eqref{2dresssing}.

\end{proof}

Define $ \psi_i=\psi_i^{[0]}:=\psi|_{\lambda=\lambda_i}$, then one can choose the specific one-fold  Darboux transformation of the EBTH equations as following
\[\bar W_1(\lambda_1)=\frac{\psi_{1}(x+\epsilon)}{\psi_{1}(x)}-\La=\frac{\T_1}{\psi_{1}(x)},\]
where

\[\T_1&=&\left|\begin{matrix}1&\Lambda\\
\psi_{1}&\psi_{1}(x+\epsilon)
\end{matrix}\right|.\]

Meanwhile, we can also get the Darboux transformation on wave function $\psi$ as following

 \[\psi^{[1]}=(\frac{\La\psi_1}{\psi_1}-\La)\psi.\]
Then using iteration on Darboux transformation, the $j$-th Darboux transformation from the $(j-1)$-th solution is as

\[\psi^{[j]}&=&(\frac{\La\psi_j^{[j-1]}}{\psi_j^{[j-1]}}-\La)\psi^{[j-1]},\\
u_{N-1}^{[j]}&=&\La u_{N-1}^{[j-1]}+(\La^N-1)\frac{\La\psi_j^{[j-1]}}{\psi_j^{[j-1]}},\\ \notag
\dots&& \dots\\
u_{i}^{[j]}&=&\La u_{i-1}^{[j-1]}+u_{i+1}^{[j]}(\La^{i+1}\frac{\La\psi_j^{[j-1]}}{\psi_j^{[j-1]}})-\frac{\La\psi_j^{[j-1]}}{\psi_j^{[j-1]}}u_{i+1}^{[j-1]},\\ \notag
\dots&& \dots,\\
u_{-M}^{[j]}&=&\frac{\La\psi_j^{[j-1]}}{\psi_j^{[j-1]}} u_{-M}^{[j-1]}\frac{\La^{-M}\psi_j^{[j-1]}}{\La^{-M+1}\psi_j^{[j-1]}},\]
where $ \psi_i^{[j-1]}:=\psi^{[j-1]}|_{\lambda=\lambda_i},$ are wave functions corresponding to different spectral with the $(j-1)$-th solutions $u_{N-1}^{[j-1]},u_{N-2}^{[j-1]},\dots,u_{-M}^{[j-1]}.$ It can be checked that $ \psi_i^{[j-1]}=0,\ \ i=1,2,\dots, j-1.$

\begin{theorem}
The two-fold  Darboux transformation of the EBTH is as following
\[\bar W_2=t_0^{[2]}+t_1^{[2]}\Lambda+\Lambda^{2}=\frac{\T_2}{\Delta_2},\]
where

\[\Delta_2=\left|\begin{matrix}
\psi_{1}(x)&\psi_{1}(x+\epsilon)\\
\psi_{2}(x)&\psi_{2}(x+\epsilon)
\end{matrix}\right|, \ \ \T_2&=&\left|\begin{matrix}1&\Lambda&\Lambda^{2}\\
\psi_{1}&\psi_{1}(x+\epsilon)&\psi_{1}(x+2\epsilon)\\
\psi_{2}&\psi_{2}(x+\epsilon)&\psi_{2}(x+2\epsilon)
\end{matrix}\right|.\]

The Darboux transformation leads to new solutions from seed solutions
\[u_{N-1}^{[2]}&=&\La^2u_{N-1}-(\La^N-1)t_1^{[2]},\\ \notag
\dots&& \dots,\\
u_{i}^{[2]}&=&\sum_{j=i}^{min{(N,i+2)}}u_{j}(x+(i+2-j)\epsilon)t_{i+2-j}^{[2]}-\sum_{j=i+1}^{min{(N,i+2)}}u_{j}^{[2]}t_{i+2-j}^{[2]}(x+j\epsilon),\\ \notag
\dots&& \dots,\\
u_{-M}^{[2]}&=&t_0^{[2]}(x)u_{-M}t_0^{[2]-1}(x-M\epsilon),\]
where $u_{i}^{[2]}$ can have another representation as
\[
u_{i}^{[2]}
&=&\frac{\sum_{j=max{(-M,i-2)}}^iu_{j}(x+(i-j)\epsilon)t_{i-j}^{[2]}-\sum_{j=max{(-M,i-2)}}^{i-1}u_{j}^{[2]}t_{i-j}^{[2]}(x+j\epsilon)}{t_{0}^{[2]}(x+i\epsilon)}.\]
\end{theorem}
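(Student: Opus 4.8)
The plan is to realize $\bar W_2$ as the composition of two single second Darboux transformations and then identify that composition with the ratio $\T_2/\Delta_2$, reading off the transformed potentials from the intertwining relation afterward. Concretely, I would set $\bar W_2=\bar W_1^{[1]}(\lambda_2)\circ\bar W_1(\lambda_1)$, where $\bar W_1(\lambda_1)=\frac{\psi_1(x+\epsilon)}{\psi_1(x)}-\La$ and $\bar W_1^{[1]}(\lambda_2)=\frac{\La\psi_2^{[1]}}{\psi_2^{[1]}}-\La$ is built from the once-dressed wave function $\psi_2^{[1]}=\bar W_1(\lambda_1)\psi_2$. Since the preceding theorem already establishes that each factor is a genuine Darboux transformation of the EBTH (its dressing equation holds via Lemma~\ref{lema}), the composite $\bar W_2$ automatically satisfies the two-fold dressing relation and $\L^{[2]}=\bar W_2\L\bar W_2^{-1}$, so no fresh intertwining computation is needed at this stage.

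The key step is the determinant representation. First I would note that $\bar W_2$ is a monic second-order difference operator, $\bar W_2=\Lambda^2+t_1^{[2]}\Lambda+t_0^{[2]}$, because each factor has leading symbol $-\La$, and that it annihilates both $\psi_1$ and $\psi_2$: the inner factor kills $\psi_1$, while $\bar W_1^{[1]}(\lambda_2)$ kills $\psi_2^{[1]}=\bar W_1(\lambda_1)\psi_2$. A monic operator $\Lambda^2+a_1\Lambda+a_0$ subject to the two annihilation conditions $\psi_i(x+2\epsilon)+a_1(x)\psi_i(x+\epsilon)+a_0(x)\psi_i(x)=0$, $i=1,2$, is uniquely determined whenever $\Delta_2(x)\neq 0$; solving this $2\times2$ linear system by Cramer's rule yields exactly $t_0^{[2]}=\Delta_2^{-1}\mu_1$ and $t_1^{[2]}=-\Delta_2^{-1}\mu_2$, where $\mu_1,\mu_2$ are the first-row minors of $\T_2$. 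This is precisely the cofactor expansion of $\Delta_2^{-1}\T_2$ along its operator-valued first row, giving $\bar W_2=\T_2/\Delta_2$. As an independent consistency check I would observe that feeding $\psi_i$ into $\T_2$ makes its top row coincide with the $\psi_i$ row, so the determinant vanishes, reconfirming $\bar W_2\psi_i=0$.

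Finally, the transformed potentials follow by comparing coefficients of $\Lambda^k$ in $\L^{[2]}\bar W_2=\bar W_2\L$, with $\L=\sum_k u_k\Lambda^k$ and $\L^{[2]}=\sum_k u_k^{[2]}\Lambda^k$. Matching the top power $\Lambda^{N+1}$ produces $u_{N-1}^{[2]}=\La^2 u_{N-1}-(\La^N-1)t_1^{[2]}$; matching the generic power $\Lambda^{i+2}$ gives the stated recursion, where the cutoffs $\min(N,i+2)$ and $\max(-M,i-2)$ encode the band width $2$ of $\bar W_2$ together with the fixed band $[-M,N]$ of $\L$; and matching the bottom power $\Lambda^{-M}$ gives $u_{-M}^{[2]}=t_0^{[2]}(x)\,u_{-M}\,t_0^{[2]-1}(x-M\epsilon)$. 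The two equivalent expressions for $u_i^{[2]}$ arise from solving the same coefficient equation around the leading $\Lambda^2$ of $\bar W_2$ versus around its constant term $t_0^{[2]}$, i.e. from matching $\Lambda^{i+2}$ versus $\Lambda^{i}$. I expect the main obstacle to be purely organizational: keeping every $\La$-shift argument and every summation limit correct throughout the coefficient comparison, since each operator shifts the spatial variable and a single misplaced $\epsilon$ shift propagates through the whole recursion. The conceptual content—composition of Darboux transformations plus uniqueness of a monic annihilating operator—is light; the labor lies in the index bookkeeping and in verifying that the two displayed forms of $u_i^{[2]}$ genuinely agree.
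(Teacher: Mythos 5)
Your proposal is correct and follows essentially the same route the paper takes: the paper likewise obtains $\bar W_2$ by iterating the one-fold second Darboux transformation (building $\psi_2^{[1]}=\bar W_1(\lambda_1)\psi_2$ and dressing again), identifies the determinant form through the annihilation conditions $\bar W_2\psi_i=0$, and extracts the potentials from the intertwining relation $\L^{[2]}\bar W_2=\bar W_2\L$. Your Cramer's-rule solution for $t_0^{[2]},t_1^{[2]}$ and the two coefficient matchings at $\Lambda^{i+2}$ and $\Lambda^{i}$ (yielding the two equivalent expressions for $u_i^{[2]}$) are precisely the ``direct calculation'' the paper leaves implicit, so no gap remains.
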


Similarly, we can generalize the Darboux transformation to $n$-fold case which is contained in the following theorem.

\begin{theorem}\label{bndarboux}
The $n$-fold  Darboux transformation of EBTH equation is as following
\[\bar W_n=t_0^{[n]}+t_1^{[n]}\Lambda+t_2^{[n]}\Lambda^{2}+\dots+(-1)^n\Lambda^{n}=\frac{1}{\Delta_n}\T_n\]
where

\[ \notag \Delta_n&=&
\left|\begin{matrix}\psi_{1}(x)&\psi_{1}(x+\epsilon)&\psi_{1}(x+2\epsilon)&\dots&\psi_{1}(x+(n-1)\epsilon)\\
\psi_{2}(x)&\psi_{2}(x+\epsilon)&\psi_{2}(x+3\epsilon)&\dots&\psi_{2}(x+(n-1)\epsilon)\\
\psi_{3}(x)&\psi_{3}(x+\epsilon)&\psi_{3}(x+3\epsilon)&\dots&\psi_{3}(x+(n-1)\epsilon)\\
\vdots&\vdots&\vdots&\vdots&\vdots\\
\psi_{n}(x)&\psi_{n}(x+\epsilon)&\psi_{n}(x+3\epsilon)&\dots&\psi_{n}(x+(n-1)\epsilon)\\
\end{matrix}\right|\]
 \[\notag\T_n&=&
\left|\begin{matrix}1&\Lambda&\Lambda^{2}&\Lambda^{3}&\dots&\Lambda^{n}\\
\psi_{1}(x)&\psi_{1}(x+\epsilon)&\psi_{1}(x+2\epsilon)&\psi_{1}(x+3\epsilon)&\dots&\psi_{1}(x+n\epsilon)\\
\psi_{2}(x)&\psi_{2}(x+\epsilon)&\psi_{2}(x+2\epsilon)&\psi_{2}(x+3\epsilon)&\dots&\psi_{2}(x+n\epsilon)\\
\psi_{3}(x)&\psi_{3}(x+\epsilon)&\psi_{3}(x+2\epsilon)&\psi_{3}(x+3\epsilon)&\dots&\psi_{3}(x+n\epsilon)\\
\vdots&\vdots&\vdots&\vdots&\vdots\\
\psi_{n}(x)&\psi_{n}(x+\epsilon)&\psi_{n}(x+2\epsilon)&\psi_{n}(x+3\epsilon)&\dots&\psi_{n}(x+n\epsilon)\\\end{matrix}\right|.\]

The Darboux transformation leads to new solutions form seed solutions
\[u_{N-1}^{[n]}&=&\La^nu_{N-1}+(-1)^n(1-\La^N)t_{n-1}^{[n]},\\ \notag
\dots&& \dots\\
u_{i}^{[n]}&=&\sum_{j=i}^{min{(N,i+n)}}u_{j}(x+(i+n-j)\epsilon)t_{i+n-j}^{[n]}-\sum_{j=i+1}^{min{(N,i+n)}}u_{j}^{[n]}t_{i+n-j}^{[n]}(x+j\epsilon),\\ \notag
\dots&& \dots,\\
u_{-M}^{[n]}&=&t_0^{[n]}(x)u_{-M}t_0^{[n]-1}(x-M\epsilon),\]

where $u_{i}^{[n]}$ can have another representation as
\[
u_{i}^{[n]}
&=&\frac{\sum_{j=max{(-M,i-n)}}^iu_{j}(x+(i-j)\epsilon)t_{i-j}^{[n]}-\sum_{j=max{(-M,i-n)}}^{i-1}u_{j}^{[n]}t_{i-j}^{[n]}(x+j\epsilon)}{t_{0}^{[n]}(x+i\epsilon)}.\]
\end{theorem}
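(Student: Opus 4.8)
The plan is to prove Theorem~\ref{bndarboux} by induction on $n$, mirroring the construction of the first $n$-fold transformation in Theorem~\ref{ndarboux} but with the forward shift $\Lambda^{+k}$ in place of the backward shift $\Lambda^{-k}$. The starting point is the iteration displayed just before the statement: the $n$-fold operator factorizes as the composition
\[
\bar W_n=\bar W_1^{[n-1]}\circ\cdots\circ\bar W_1^{[1]}\circ\bar W_1^{[0]},\qquad
\bar W_1^{[j]}=\frac{\Lambda\psi_{j+1}^{[j]}}{\psi_{j+1}^{[j]}}-\Lambda,
\]
of the one-fold transformations whose validity was already established in the one-fold second Darboux theorem above. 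Since each factor is an admissible Darboux transformation, the composition automatically satisfies the dressing relation $\L^{[n]}=\bar W_n\L\bar W_n^{-1}$ and the flow-compatibility equation $\bar W_{n,t_{\gamma,n}}=\bar W_n(B_{\gamma,n})_--(\bar W_nB_{\gamma,n}\bar W_n^{-1})_-\bar W_n$, so no further intertwining check is needed; all the new content is the determinant representation and the explicit coefficient formulas.

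Next I would characterize $\bar W_n$ abstractly. It is a difference operator of the form $t_0^{[n]}+t_1^{[n]}\Lambda+\cdots+(-1)^n\Lambda^n$ with only the $n$ coefficients $t_0^{[n]},\dots,t_{n-1}^{[n]}$ free. The key property, inherited from the factorization together with the already-noted relations $\psi_i^{[j-1]}=0$ for $i=1,\dots,j-1$, is that $\bar W_n\psi_i=0$ for every $i=1,\dots,n$: indeed $\bar W_1^{[i-1]}$ annihilates $\psi_i^{[i-1]}$ while the later factors merely transport it. These $n$ linear conditions on the $n$ unknown coefficients have, for generic spectral data, a unique solution, so $\bar W_n$ is the \emph{unique} operator of the stated form annihilating $\psi_1,\dots,\psi_n$.

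I would then verify that the determinant ratio $\T_n/\Delta_n$ is exactly this unique operator. Applying $\T_n$ to $\psi_i$ replaces each $\Lambda^k$ in the top row by $\Lambda^k\psi_i=\psi_i(x+k\epsilon)$; the top row then coincides with the $(i+1)$-th row, so $\T_n\psi_i=0$ and hence $(\T_n/\Delta_n)\psi_i=0$. Expanding $\T_n$ by cofactors along the first row shows it has the form $\sum_{k=0}^n(-1)^k(\text{minor}_k)\Lambda^k$; deleting the first row and last column gives a minor equal to $\Delta_n$, so the coefficient of $\Lambda^n$ is $(-1)^n\Delta_n/\Delta_n=(-1)^n$, and deleting the first row and first column gives $t_0^{[n]}=(\text{minor}_0)/\Delta_n$. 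Thus $\T_n/\Delta_n$ has the correct form and the correct annihilation property, and by the uniqueness of the previous paragraph $\bar W_n=\T_n/\Delta_n$.

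Finally the solution formulas follow by comparing coefficients of powers of $\Lambda$ on both sides of $\bar W_n\L=\L^{[n]}\bar W_n$. Reading off the top power $\Lambda^{N+n}$ yields $u_{N-1}^{[n]}=\Lambda^nu_{N-1}+(-1)^n(1-\Lambda^N)t_{n-1}^{[n]}$; matching the intermediate powers from the high end gives the recursion for $u_i^{[n]}$, while matching from the lowest power and using $u_{-M}^{[n]}=t_0^{[n]}(x)\,u_{-M}\,t_0^{[n]-1}(x-M\epsilon)$ produces the alternative representation. The main obstacle I anticipate is the inductive bookkeeping in the determinant step: one must show that each iteration enlarges $\T_n$ and $\Delta_n$ by one row and one column with the correct forward shifts $x\mapsto x+k\epsilon$ and the correct cofactor signs leading to the factor $(-1)^n$. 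Controlling where the truncations $(\cdot)_-$ land requires the positive-projection identity \eqref{Cpos} of Lemma~\ref{lema} (rather than \eqref{Bneg}), and it is precisely in the shift convention and the sign of the leading term that a careless step would propagate an error and break the consistency of the two equivalent expressions for $u_i^{[n]}$.
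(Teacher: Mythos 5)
Your proposal is correct and follows essentially the same route the paper intends but leaves implicit: composing the one-fold second Darboux transformations, using the kernel conditions $\bar W_n\psi_i=0$ (which the paper notes can be "easily checked") to characterize $\bar W_n$ uniquely, identifying it with $\T_n/\Delta_n$ by the repeated-row argument and cofactor expansion, and reading the $u_i^{[n]}$ formulas off the coefficients of $\bar W_n\L=\L^{[n]}\bar W_n$. One cosmetic slip: the coefficient of $\Lambda^{N+n}$ matches identically on both sides, and it is the power $\Lambda^{N+n-1}$ that actually yields $u_{N-1}^{[n]}=\La^n u_{N-1}+(-1)^n(1-\La^N)t_{n-1}^{[n]}$.
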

It can be easily checked that $\bar W_n\psi_i=0,\ i=1,2,\dots,n.$

Taking seed solution $u_{N-1}=u_{N-2}=\dots =u_{-M+1}=0,u_{-M}=1$, then using Theorem \ref{bndarboux},  one can get the $n$-th new solution of the EBTH as

\[u_{N-1}^{[n]}&=&(-1)^n(1-\La^N)\d_{t_{N-1,0}}\log Wr(\psi_1,\psi_2,\dots\psi_n),\\ \notag
\dots &&\dots\\
u_{-M}^{[n]}&=&e^{(\La-1)(1-\La^{-M})\log Wr(\psi_1,\psi_2,\dots\psi_n)},\]

where $Wr(\psi_1,\psi_2,\dots\psi_n)$ is the discrete Wronskian
\[Wr(\psi_1,\psi_2,\dots\psi_n)=det (\La^{j-1} \psi_{n+1-i})_{1\leq i,j\leq n}.\]
Particularly for the $(N,N)$-EBTH, choosing appropriate wave function $\psi$, another family of $n$-th new solutions can be solitary wave solutions, i.e. $n$-soliton solutions.

\section{Another kind of soliton solutions}

In this section, we will use the second Darboux transformation of the EBTH to generate a new kind of  solutions from trivial seed solutions. In particular, when $N=M$, some soliton solutions will be shown using Darboux transformation.

To give a nice solution, one need to rewrite the extended flow  in the Lax equations of the EBTH in the following form
\begin{equation}
  \label{edef2b}
\frac{\partial \L}{\partial t_{-M,n}} = [\bar  A_{-M,n} ,\L ],
\end{equation}
\begin{align}
  &\bar  A_{-M,n}  = -\frac{2}{\epsilon n!} [ \L^n (\log \L - \frac12 ( \frac1M + \frac1N) c_n ) ]_-+\frac{1}{\epsilon n!} [ \L^n (\log_+ \L - \frac12 ( \frac1M + \frac1N) c_n ) ].
\end{align}

Taking seed solution $u_{N-1}=u_{N-2}=\dots =u_{-M+1}=0,u_{-M}=1$, then the initial wave function $\psi_i$ satisfies
\[\label{binitial}\Lambda^{N}\psi_i+\Lambda^{-M}\psi_i=\lambda_i\psi_i,\ 1\leq i\leq n.\]

To include all the dynamical variables, i.e. considering other flows in eq.\eqref{deflax} of the EBTH, wave function $\psi$ can be chosen in the form as
\begin{align}\notag
\psi&=\sum_{m=1}^{N+M}a_{m}\exp (\frac{x +\frac{\lambda^m}{m!} t_{-M,m}}{\epsilon}\log z_{m}+\frac{ t_{-M,m}}{2\epsilon m!}( \frac1M + \frac1N)(-(\lambda_{}^m)_++(\lambda_{}^m)_-)- \\ \label{bphiNM}
&\sum_{\alpha=1}^{N-1}\sum_{n\geq0}\frac{\Gamma (2- \frac{\alpha}{N} )}{  \epsilon \Gamma(n+2 -\frac{\alpha}{N} ) } ( \lambda^{n+1-\frac{\alpha}N })_-t_{\alpha,n}- \sum_{\beta=-M+1}^0\sum_{n\geq0}\frac{\Gamma (2+\frac{\alpha}{M} )}{  \epsilon \Gamma(n+2 +\frac{\alpha}{M} ) } ( \lambda^{n+1+\frac{\alpha}M })_-t_{\beta,n}),
   \end{align}
   where $z_{m}^N+z_{m}^{-M}=\lambda,\  1\leq m\leq N+M,$ and the projection $``\pm"$ is about $z_m$. One can find the projection $( \lambda^{n+1+\frac{\alpha}M })_+$ now contains finite number of terms. Therefore this kind of Darboux transformation benefits in showing the character of the $t_{\beta,n}, -M+1\leq \beta\leq 0$ flow.  But one can find the projection $ ( \lambda^{n+1-\frac{\alpha}N })_-$ now contains infinite number of terms which tell us the second Darboux transformation is not so good to produce new solutions depending on $t_{\alpha,n},1\leq \alpha\leq N-1$. From this point, the extended bigraded Toda hierarchy is quite different from the extended Toda hierarchy\cite{CDZ}.
 Here the identity   $z_{m}^N+z_{m}^{-M}=\lambda$ comes from eq.\eqref{binitial}.

When $N=M$, we can choose $ \psi$ as
\begin{align}\notag
\psi&=\sum_{m=1}^{N}a_{m}cosh (\frac{x +\frac{\lambda^m}{m!} t_{-M,m}}{\epsilon}\log z_{m}+\frac{ t_{-M,m}}{2\epsilon m!}( \frac1M + \frac1N)(-(\lambda_{}^m)_++(\lambda_{}^m)_-)- \\ \label{bphiNN}
&\sum_{\alpha=1}^{N-1}\sum_{n\geq0}\frac{\Gamma (2- \frac{\alpha}{N} )}{  \epsilon \Gamma(n+2 -\frac{\alpha}{N} ) } ( \lambda^{n+1-\frac{\alpha}N })_-t_{\alpha,n}- \sum_{\beta=-M+1}^0\sum_{n\geq0}\frac{\Gamma (2+\frac{\alpha}{M} )}{  \epsilon \Gamma(n+2 +\frac{\alpha}{M} ) } ( \lambda^{n+1+\frac{\alpha}M })_-t_{\beta,n}).
   \end{align}
   Then the new generated new solutions of the $(N,M)$-EBTH form trivial solutions using the wave function $\psi$ in eq.\eqref{bphiNM} are as
\[\label{buN}u_{N-1}^{[1]}&=&(\La^N-1)\frac{\psi_1}{\La\psi_1},\\ \notag
\dots &&\dots\\ \label{buM}
u_{-M}^{[1]}&=&\frac{\La\psi_1}{\psi_1}\frac{\La^{-M}\psi_1}{\La^{-M+1}\psi_1}.\]
   The wave function $\psi$ in eq.\eqref{bphiNN} will be produced into one-soliton solution of the $(N,N)$-EBTH after transformation eq.\eqref{buN}-eq.\eqref{buM}.
Supposing $N=M=1$, the solutions will be exactly another kind of soliton solutions which are different from the soliton solutions mentioned in \cite{carletthesis}.
To see the solutions of the EBTH clearly, we will check the equations of  $(2,2)$-EBTH, i.e. $N=M=2$ and prove the new kind of solutions after Darboux transformation are also solutions of the $(2,2)$-EBTH.

 For the $(2,2)$-EBTH, to include all the dynamical variables,
one can choose wave function  $\psi$ in a form of a combination of
\begin{align}\label{bphi22}
\psi&=a\bar\psi(z)+b\bar\psi(\frac 1z),
   \end{align}
where
\begin{align}\notag
\bar\psi&=\cosh (\frac{1}{\epsilon}\sum_{k\geq0}[(x+\frac{\lambda^kt_{-2,k}}{k!})\log z+\frac{t_{-2,k} }{2k!}(-(\lambda^k)_++(\lambda^k)_-)]\\
&- \sum_{\alpha=0}^1\sum_{n\geq0}\frac{\Gamma (2- \frac{\alpha}{2} )}{  \epsilon \Gamma(n+2 -\frac{\alpha}{2} ) } ( \lambda^{n+1-\frac{\alpha}2 })_-t_{\alpha,n}- \sum_{n\geq0}\frac{\Gamma (2-\frac{1}{2} )}{  \epsilon \Gamma(n+2 -\frac{1}{2} ) } ( \lambda^{n+1-\frac{1}2 })_-t_{-1,n}),
   \end{align}
where $z^2+z^{-2}=\lambda,$ and the projection $\pm$ is about the powers of $z$. In the coefficients of $t_{1,n},t_{-1,n}$, the square roots of $\lambda$ mean the expansions around $\infty$ and $0$ respectively.

Then the soliton solution generated from eq.\eqref{bphi22} is as
\[\label{b22u1}u_1^{[1]}&=&(\La^2-1)\frac{\La\psi_1}{\psi_1},\\
u_{0}^{[1]}&=&u_{1}^{[1]}(\La\frac{\La\psi_1}{\psi_1}),\\
u_{-1}^{[1]}&=&u_0^{[1]}(\frac{\La\psi_1}{\psi_1}),\\ \label{b22u-2}
u_{-2}^{[1]}&=&\frac{\La\psi_1}{\psi_1}\frac{\La^{-2}\psi_1}{\La^{-1}\psi_1}.\]

Choose $a=b=1,\epsilon=1$, then

   \begin{align}\label{b22simple}
\psi&=\cosh (\xi(x,t,z))+\cosh (\xi(x,t,\frac1z)),\ \ \\
\xi(x,t,z)&:=[(x+(z^2+z^{-2}) t_{-2,1})\log z+\frac{t_{-2,1} }{2}(-z^2+z^{-2})]- z^{-2}t_{0,0}-z^{-1}t_{-1,0}.
   \end{align}
Bringing $\psi_1=\psi|_{z=z_1}$ in \eqref{b22simple} into the above identities \eqref{b22u1}-\eqref{b22u-2} will lead to specific form of the new solutions

\begin{eqnarray*}u_1^{[1]}&=&(\La^2-1)\frac{\cosh (\xi(x+1,t,z_1))+\cosh (\xi(x+1,t,\frac1{z_1}))}{\cosh (\xi(x,t,z_1))+\cosh (\xi(x,t,\frac1{z_1}))},\\
u_{0}^{[1]}&=&((\La^2-1)\frac{\cosh (\xi(x+1,t,z_1))+\cosh (\xi(x+1,t,\frac1{z_1}))}{\cosh (\xi(x,t,z_1))+\cosh (\xi(x,t,\frac1{z_1}))})\\
&&\frac{\cosh (\xi(x+2,t,z_1))+\cosh (\xi(x+2,t,\frac1{z_1}))}{\cosh (\xi(x+1,t,z_1))+\cosh (\xi(x+1,t,\frac1{z_1}))},\\
u_{-1}^{[1]}&=&((\La^2-1)\frac{\cosh (\xi(x+1,t,z_1))+\cosh (\xi(x+1,t,\frac1{z_1}))}{\cosh (\xi(x,t,z_1))+\cosh (\xi(x,t,\frac1{z_1}))})\\
&&\frac{\cosh (\xi(x+2,t,z_1))+\cosh (\xi(x+2,t,\frac1{z_1}))}{\cosh (\xi(x,t,z_1))+\cosh (\xi(x,t,\frac1{z_1}))},\\
u_{-2}^{[1]}&=&\frac{\cosh (\xi(x+1,t,z_1))+\cosh (\xi(x+1,t,\frac1{z_1}))}{\cosh (\xi(x,t,z_1))+\cosh (\xi(x,t,\frac1{z_1}))}\frac{\cosh (\xi(x-2,t,z_1))+\cosh (\xi(x-2,t,\frac1{z_1}))}{\cosh (\xi(x-1,t,z_1))+\cosh (\xi(x-1,t,\frac1{z_1}))}.
\end{eqnarray*}

One can check that the new solutions satisfy primary flow equations \eqref{00flow},\eqref{-10flow},\eqref{-20flow},\eqref{edef2b}.
For the above soliton solutions, we can find the velocity of the $t_{-2,1}$ flow is the same as the soliton solution generated from the first Darboux transformation, the velocity of the $t_{0,0}$ flow is as $\frac{1}{z^{2}\log z}$ and the velocity of the $t_{-1,0}$ flow is as $\frac{1}{z\log z}$. From Fig.\ref{Vanother} in appendix, we can see the velocity of $t_{-2,1}$ flow which is quite different from the velocity of the $t_{0,0},t_{-1,0}$ flow which can also be seen from the direction of the soliton graphs.  When $z$ go to $+\infty$, the velocity of the $t_{0,0},t_{-1,0}$ becomes $0$, which means the slop of the $t_{0,0},t_{-1,0}$  flow of the soliton flows  will also approach to be vertical to $x$ axis.

\sectionnew{Conclusions and Discussions} In this paper, we give two kinds of Darboux transformation of the EBTH which are used to generate new solutions from seed solutions including soliton solutions of $(N,N)$-EBTH. Meanwhile we plotted the soliton graphs of the $(N,N)$-EBTH from which some approximation analysis is done. From the analysis on velocities of soliton solutions, the effect of the extended flows are shown. Meanwhile the higher order Darboux transformation and the general form of the $n$-th new solutions are also given.   It was found that the velocities of the extended flows are quite different from other flows. The application of these above two Darboux transformations in generating new solutions from known solutions related to the Gromov-Witten theory of orbiford $c_{NM}$ is an interesting questions. Also how to generate higher order solutions using the adjoint Darboux transformation and Binary Darboux transformation (see \cite{binamatveev,Oevel}) of the EBTH and what  the relation between the above two Darboux transformation in this paper and the adjoint and Binary Darboux transformation is are both interesting questions.

\vskip 0.5truecm \noindent{\bf Acknowledgments.}
 This work is supported by  the Zhejiang Provincial Natural Science Foundation under Grant No. LY15A010004, the National Natural Science Foundation of China under Grant No. 11571192,
 the Natural Science Foundation of Ningbo under Grant No. 2015A610157 and the K. C. Wong Magna Fund in
Ningbo University. The author would like to thank Professor J. S. He for his helpful suggestions.
%%%%%%%%%%%%%%%%% References  %%%%%%%%%%%%%%%%%%%%%%%%%%%%%%%%%%%%%%%

\vskip20pt

{\bf Appendix:}

\begin{figure}[h!]
\centering
\raisebox{0.85in}{($u_1^{[1]}$)}\includegraphics[scale=0.30]{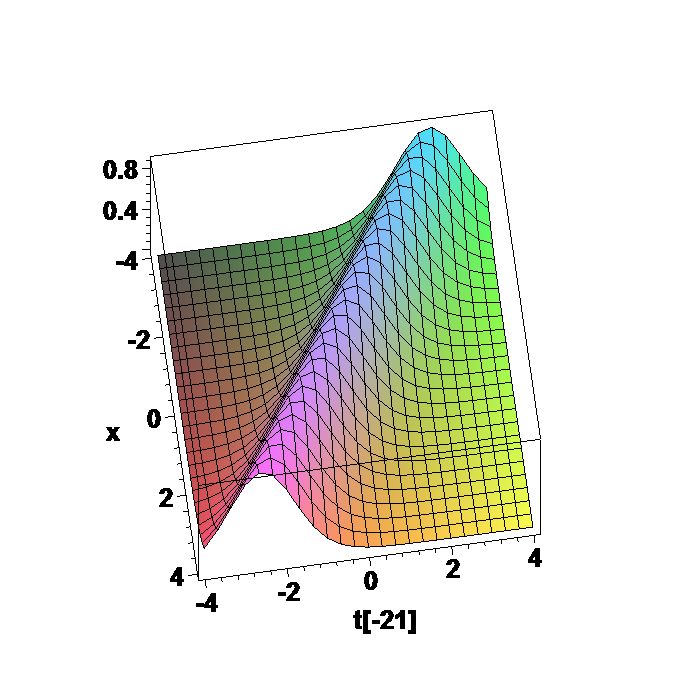}
\hskip 0.03cm
\raisebox{0.85in}{($u_0^{[1]}$)}\raisebox{-0.1cm}{\includegraphics[scale=0.30]{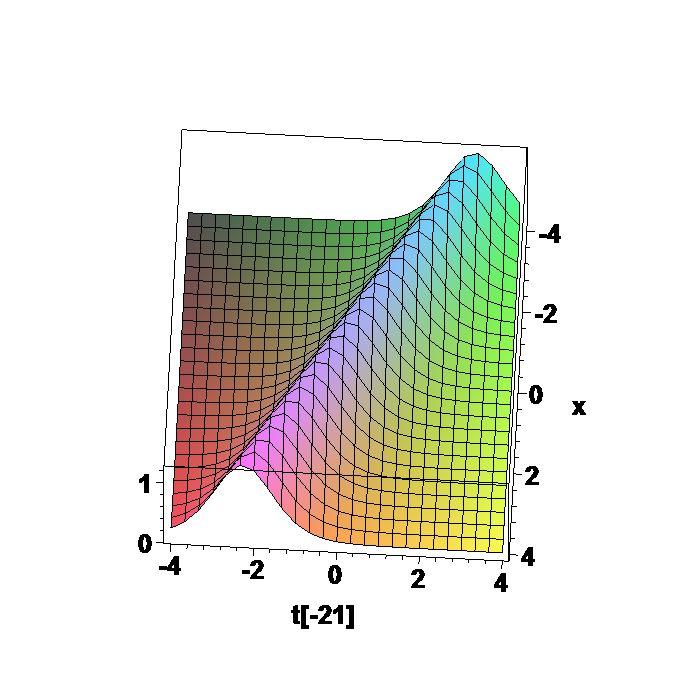}}
\hskip 0.03cm
\raisebox{0.85in}{($u_{-1}^{[1]}$)}\raisebox{-0.1cm}{\includegraphics[scale=0.30]{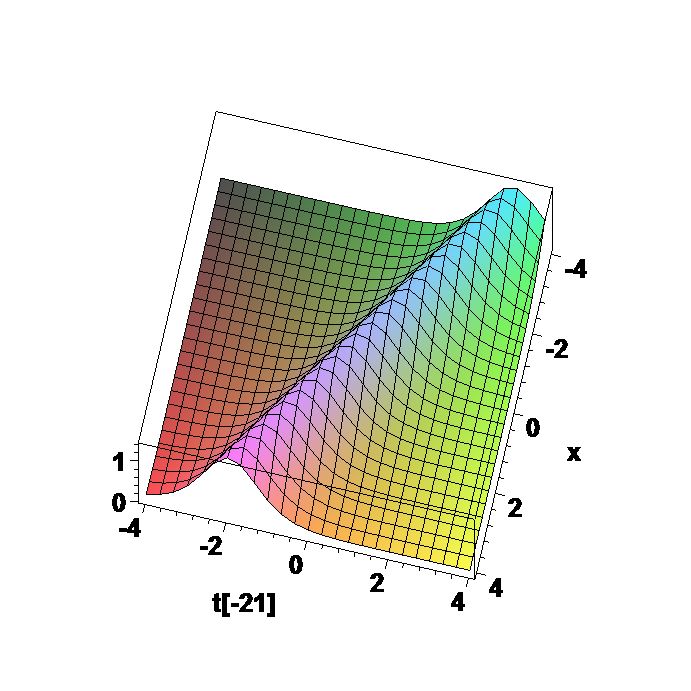}}
\hskip 0.03cm
\raisebox{0.85in}{($u_{-2}^{[1]}$)}\raisebox{-0.1cm}{\includegraphics[scale=0.30]{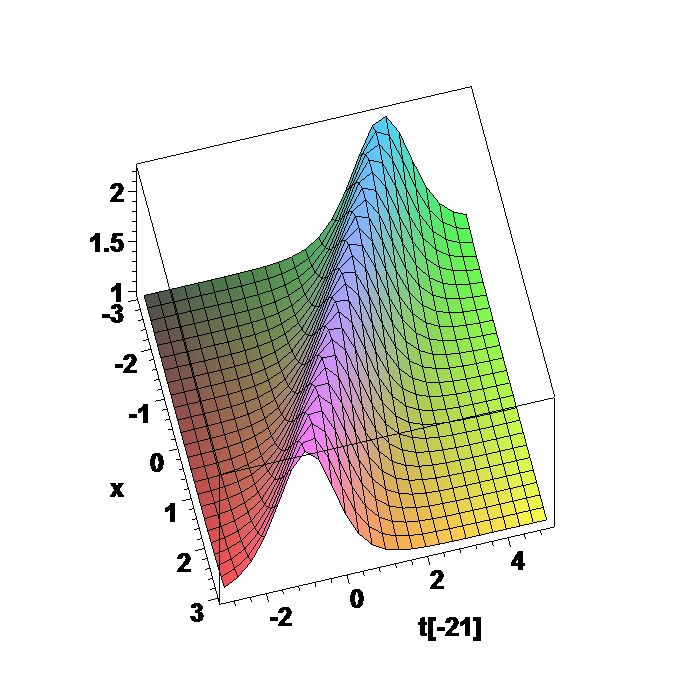}}
 \caption{\small (color online)Soliton solution $u_1^{[1]},u_{0}^{[1]},u_{-1}^{[1]},u_{-2}^{[1]}$  of the $(2,2)$-EBTH when $z=2,a=b=1$ where the label
   $t[-21]$ is denoted as $t_{-2,1}$.} \label{u1}
\end{figure}

\begin{figure}[h!]
\centering
\raisebox{0.85in}{($v_{t_{-2,1}}$)}\includegraphics[scale=0.17]{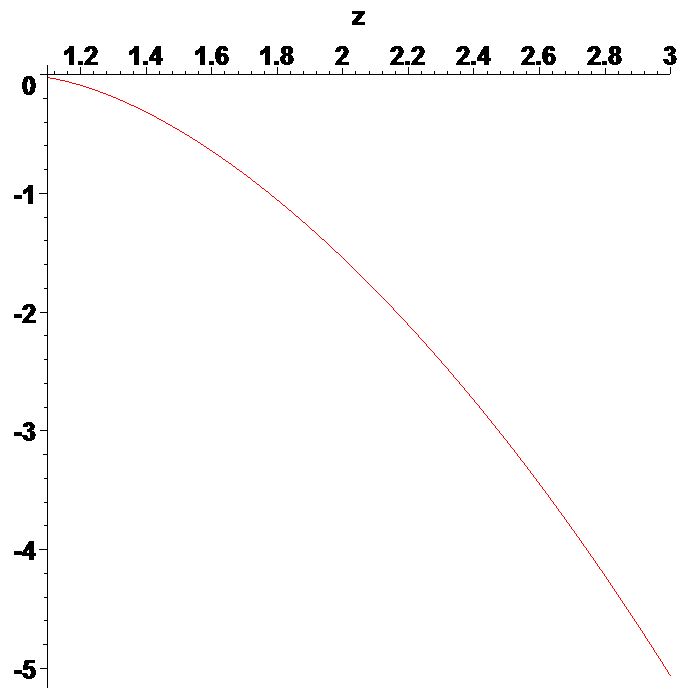}
\hskip 0.03cm
\raisebox{0.85in}{($v_{t_{0,0}}$)}\raisebox{-0.1cm}{\includegraphics[scale=0.15]{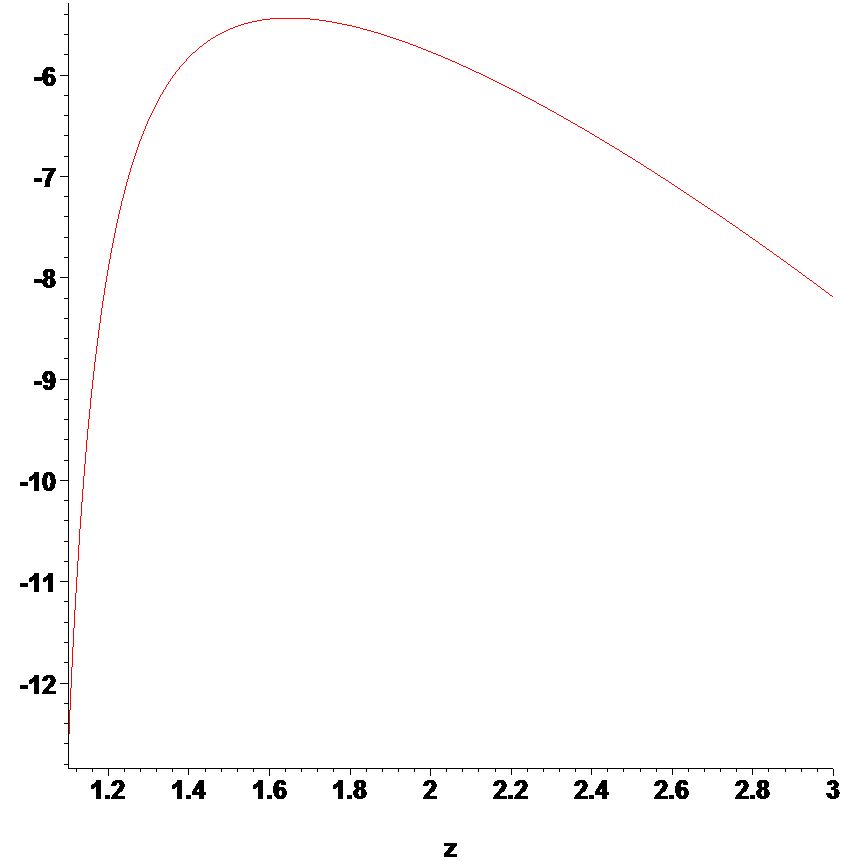}}
\hskip 0.03cm
\raisebox{0.85in}{($v_{t_{1,0}}$)}\raisebox{-0.1cm}{\includegraphics[scale=0.15]{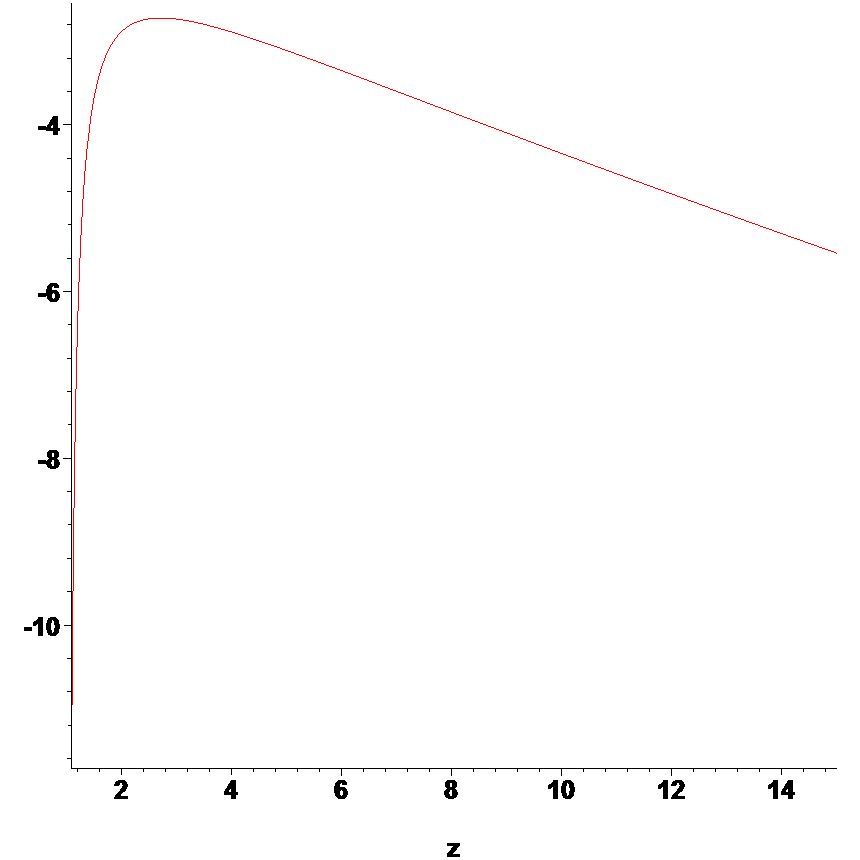}}
 \caption{\small (color online)The velocity plot of the $t_{-2,1}$ flow, $t_{0,0}$ flow, $t_{1,0}$ flow of the soliton solution generated from the first Darboux transformation  the $(2,2)$-EBTH.} \label{V}
\end{figure}

\begin{figure}[h!]
\centering
\raisebox{0.85in}{($v_{t_{0,0}}$)}\raisebox{-0.1cm}{\includegraphics[scale=0.20]{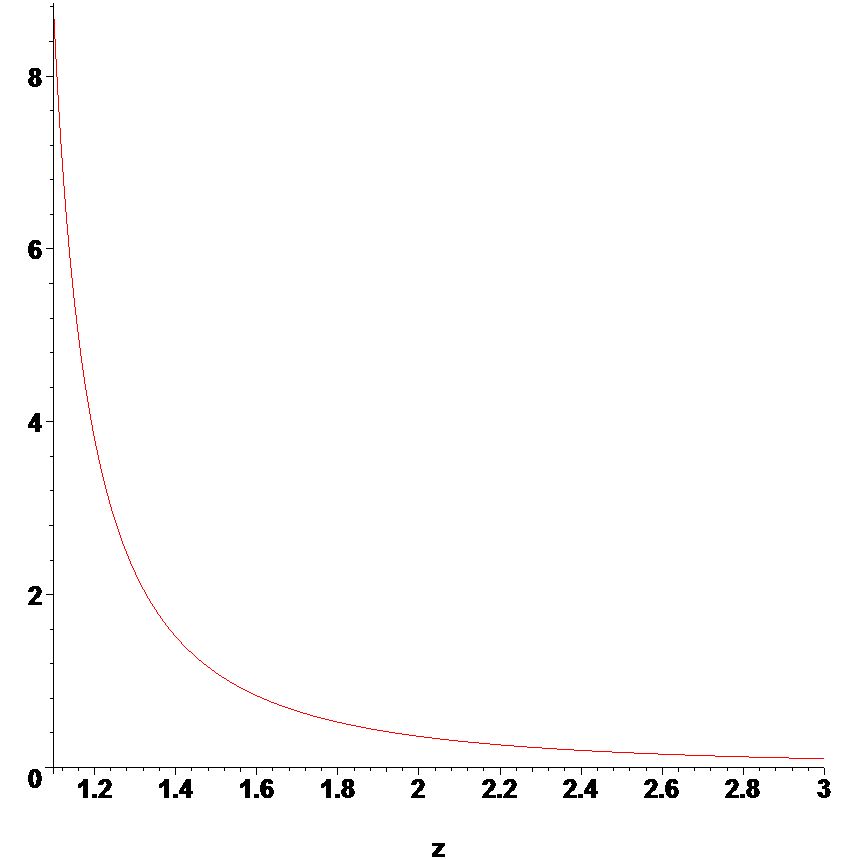}}
\hskip 0.03cm
\raisebox{0.85in}{($v_{t_{-1,0}}$)}\raisebox{-0.1cm}{\includegraphics[scale=0.20]{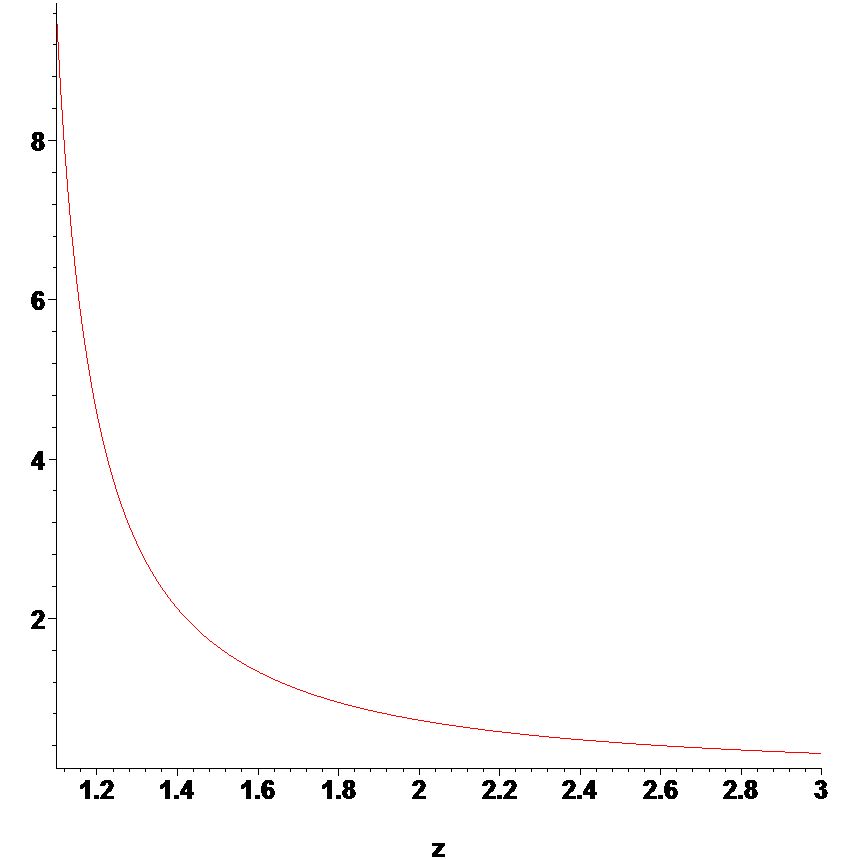}}
 \caption{\small (color online)The velocity plot of the  $t_{0,0}$ flow, $t_{-1,0}$ flow of another soliton solution generated from the second Darboux transformation of the $(2,2)$-EBTH.} \label{Vanother}
\end{figure}

%%%%%%%%%%%%%%%%%%%%%%%%%%%%%%%%%%%%%%%%%%%%%%%%%%%%%%%%%%%%%%%%%%%%%%%%%%%%%%%

%---------------------------------------------------------------------------------------


\begin{thebibliography}{AAA1}
\frenchspacing


\bibitem{Toda}
M. Toda, Wave propagation in anharmonic lattices, J. Phys. Soc.
Jpn. 23(1967) 501-506.


 \bibitem{UT}K. Ueno, K. Takasaki, Toda lattice hierarchy,
In \emph{``Group representations and systems of differential
equations'' (Tokyo, 1982)}, 1-95, Adv. Stud. Pure Math. 4,
North-Holland, Amsterdam, 1984.
     \bibitem{Z}
Y. Zhang, On the $C{\rm P}\sp 1$ topological sigma model and the
Toda lattice hierarchy, J. Geom. Phys. 40(2002), 215-232.


\bibitem{D witten}
R. Dijkgraaf, E. Witten, Mean field theory, topological field
theory, and multimatrix models, Nucl. Phys. B 342(1990), 486-522.
\bibitem{witten}
E. Witten, Two-dimensional gravity and intersection theory on
moduli space, Surv. in Diff. Geom. 1(1991), 243-310.
\bibitem{dubrovin} B. A.
Dubrovin, in Integrable systems and quantum groups (Montecatini
Terme, 1993), 120-348, Lecture Notes in Math. 1620, Springer,
Berlin, 1996;
 \bibitem{CDZ}
G. Carlet, B. Dubrovin, Y. Zhang,  The Extended Toda Hierarchy,
Mosc. Math. J.  4(2004), 313-332.

\bibitem{DZ} B. Dubrovin, Y. Zhang,
Normal forms of hierarchies of integrable PDEs, Frobenius manifolds
and Gromov-Witten invariants, arXiv: math.DG/0108160.
   \bibitem{Ge} E. Getzler, The Toda conjecture,
Proceedings of conference on symplectic geometry, KIAS, Seoul,
August 2000. arXiv: math.AG/0108108.
\bibitem{matrix model}
T. Eguchi, S.K. Yang, The topological $CP^1$ model and the large-$N$
matrix integral, Modern Phys. Lett. A 9(1994), 2893-2902.
\bibitem{carletthesis}
G. Carlet, Extended Toda hierarchy and its hamiltonian structure, PhD. thesis, SISSA, 2003.
\bibitem{C}
G. Carlet, The extended bigraded Toda hierarchy, Journal of Physics A: Mathematical and Theoretical 39(2006), 9411-9435.
     \bibitem{TH}
T. Milanov, H. H. Tseng, The spaces of Laurent polynomials,
$\mathbb{P}^1$-orbifolds, and integrable hierarchies,
 Journal f\"ur die reine und angewandte
Mathematik 622(2008), 189-235.


\bibitem{ourJMP}
 C. Z. Li, J. S. He, K. Wu, Y. Cheng,  Tau function and  Hirota bilinear equations for the extended  bigraded Toda
 Hierarchy, J. Math. Phys. 51(2010), 043514.



\bibitem{leurhirota}G. Carlet, J. van de Leur, Hirota equations for the extended bigraded Toda hierarchy and the total
  descendent potential of $\mathbb{P}^1$ orbifolds, arxiv:1304.1632, Journal of Physics A: Mathematical and Theoretical, 46(2013), 405205-405220.




\bibitem{ourBlock}
 C. Z. Li, J. S. He, Y. C. Su, Block type symmetry of bigraded Toda hierarchy,
J. Math. Phys. 53(2012), 013517.

\bibitem{solutionBTH} C. Z. Li, Solutions of  bigraded Toda hierarchy, Journal of Physics A: Mathematical and Theoretical, 44, 255201(2011), arXiv:1011.4684.
\bibitem{RMP}C. Z. Li, J. S. He, Dispersionless bigraded Toda hierarchy and its additional symmetry, Reviews in Mathematical Physics 24(2012), 1230003
 \bibitem{annal}C. Z. Li, J. S. He, Regular solution and lattice miura transformation  of  bigraded Toda Hierarchy, Chinese Annals of Mathematics, Series B, 34(2013), 865-884.
\bibitem{Liys}Y. S. Li, X. S. Gu, M. R. Zou, Three kinds of Darboux transformation for the evolution equation which
connect with AKNS eigenvlue problem, Acta Mathematica Sinica New Series 3(1985), 143-151.

\bibitem{matveev}V. B. Matveev, Darboux transformation and differential-difference and difference-difference
evloution equations, Lett. Math. Phys. 3(1979), 217-222.

\bibitem{Mateev} V. B. Matveev, M. A. Salle, Darboux Transfromations and Solitons, Springer-Verlag Berlin, 1991.
\bibitem{binamatveev} V. M. Babich, V. B. Matveev, M. A. Salle,
 Binary Darboux transformation for the Toda lattice equation, Zap. Nauchn. Sem. LOMI, 145(1985), 34-45.
  \bibitem{Oevel}W. Oevel, Darboux transformation for integrable lattice systems, in: E. Alfinito, M. Boiti, L. Martina, F. Pempinelli(Eds.), Nonlinear Physics: Theory and Experiment, World Scientific Publishing Co. Singapore, 1996, 233-240.

\bibitem{gaugetras}J. S. He, Y. S. Li, Y. Cheng, The determinant representation of the gauge transformation operators, Chin. Ann. of Math. 23B:4(2002),475-486.
\bibitem{Hedeterminant}J. S. He,  L. Zhang,  Y. Cheng and  Y. S. Li,  Determinant representation of Darboux transformation for the
AKNS system, Sci. China A 12(2006), 1867-1878.

\bibitem{PRE}
 C. Z. Li, J. S. He, K. Porsezian, Rogue waves of the Hirota and the Maxwell-Bloch equation, Physical Review E 87(2013), 012913
\bibitem{rogers}W. Oevel, C. Rogers, Gauge transformations and reciprocal links in $2+1$ dimensions, Rev. Math. Phys.,
5(1993), 299-330.



 \bibitem{adler92} M. Adler, P. V. Moerbeke, DarbouxTransforms on band matrices, weights, and associated polynomials,
International Mathematics Research Notices,  No.18(2001).

\end{thebibliography}
\end{document}